\def\be{\begin{eqnarray*}}
\def\ee{\end{eqnarray*}}
\def\bew{\begin{eqnarray*}}
\def\eew{\end{eqnarray*}}
\def\p{\partial}
\def\tr{{\rm tr}\,}
\def\l[{\phantom.[}
\newlength{\dhatheight}
\newcommand{\C}{\mathbb{C}}
\newcommand{\Z}{\mathbb{Z}}
\newcommand{\Ct}{\C^\times}
\newcommand{\bP}{\mathbb{P}}
\newcommand{\cO}{\mathscr{O}}
\newcommand{\cK}{\mathscr{K}}
\newcommand{\cZ}{\mathscr{Z}}
\newcommand{\cT}{\mathscr{T}}
\newcommand{\cF}{\mathscr{F}}
\newcommand{\cG}{\mathscr{G}}
\newcommand{\tO}{\widehat{\cO}}
\newcommand{\vir}{\textup{vir}}
\newcommand{\bT}{\mathsf{T}}
\newcommand{\bTt}{\widetilde{\bT}}
\newcommand{\cE}{\mathscr{E}}
\newcommand{\cA}{\mathscr{A}}
\newcommand{\cV}{\mathscr{V}}
\newcommand{\bSd}{{\mathsf{S}}^{\raisebox{0.5mm}{$\scriptscriptstyle
      \bullet$}}}
\newcommand{\sSd}{{\mathbb{S}}^{\raisebox{0.5mm}{$\scriptscriptstyle
      \bullet$}}}
\DeclareMathOperator\PT{PT}
\DeclareMathOperator\Coker{Coker}
\DeclareMathOperator\rk{rk}
\DeclareMathOperator\pt{pt}
\DeclareMathOperator\Coh{Coh}
\DeclareMathOperator\Ind{Ind}
\DeclareMathOperator\Aut{Aut}
\DeclareMathOperator\Attr{Attr}
\theoremstyle{theorem}
\newtheorem{thm}{Theorem}
\newtheorem{lem}{Lemma}[section]
\newtheorem{prop}[lem]{Proposition}
\newtheorem{cor}[lem]{Corollary}
\theoremstyle{definition}
\theoremstyle{remark}
\DeclareMathOperator\Ker{Ker}
\DeclareMathOperator\ev{ev}
\DeclareMathOperator\Hilb{Hilb}
\DeclareMathOperator\Stab{Stab}
\DeclareMathOperator\Fr{Fr}
\DeclareMathOperator\supp{supp}
\let\emptyset\varnothing
\title{The 2-leg vertex in K-theoretic DT theory}
\author{Ya.~Kononov, A.~Okounkov, A.~Osinenko}
\begin{document}

\maketitle

\begin{abstract}
K-theoretic Donaldson-Thomas counts of curves in toric and many
related threefolds can be computed in terms of a certain canonical
3-valent tensor, the K-theoretic equivariant vertex. In this paper we
derive a formula for 
the vertex in the case when two out of three entries are
nontrivial. We also discuss some applications of this result.  
\end{abstract}

\section{Introduction}

\subsection{}

Donaldson-Thomas (DT) theories, broadly interpreted, are enumerative
theories 
of objects  that look like coherent sheaves on a algebraic threefold. In
this very broad spectrum of possibilities, DT counts of curves in
an algebraic threefold $X$ stand out due to the intrinsic richness
of the subject, of the generality in which such counts may be defined
and studies, and also because of the range of connections with other
branches of mathematics and mathematical physics. See, for example,  
\cite{OkTak} for a set of introductory lectures, and also 
\cite{INOV} for an early discussion of the meaning of DT counts
in theoretical physics. From the perspective of both mathematics and physics, it is particularly natural to study DT counts in
equivariant K-theory, which is the setting of this paper. 

In contrast to counts defined only with assumptions like 
$c_1(X)=0$, DT counts of curves in general 3-folds $X$ are much more flexible. The degeneration
and localization properties of these counts (see \cite{OkTak} for an
introduction), make the theory resemble the 
computation of  Chern-Simons (CS) counts for
real 3-folds by cutting and
gluing. Similarly to how CS counts may be reduced to a few basic
tensors (described in terms of quantum groups), there are some basic
tensors for DT counts of curves, of which the 3-valent K-theoretic
vertex is the most important one. 

\subsection{}

The 3-valent vertex is defined as the equivariant count of curves in
the coordinate space $X=\C^3$. This can be defined as either straight
equivariant localization counts for suitable moduli spaces of 
one-dimensional sheaves on $\C^3$, or with
relative boundary condition along divisors $D_1,D_2,D_3$ that compactify
$X$ in some ambient geometry like $(\mathbb{P}^1)^3$. In either case, 
the vertex takes 3 partitions or, more canonically, a triple of 
elements of $K_\textup{eq}(\Hilb(\C^2,\textup{points}))$ as its
argument. Variations in boundary condition result in gauge
transformations of the vertex that are understood, albeit
complicated. In this paper, we find a particular gauge, that is, 
a particular relative geometry that makes the 2-leg vertex simple. 

\begin{figure}[!htbp]
  \centering
   \includegraphics[scale=0.84]{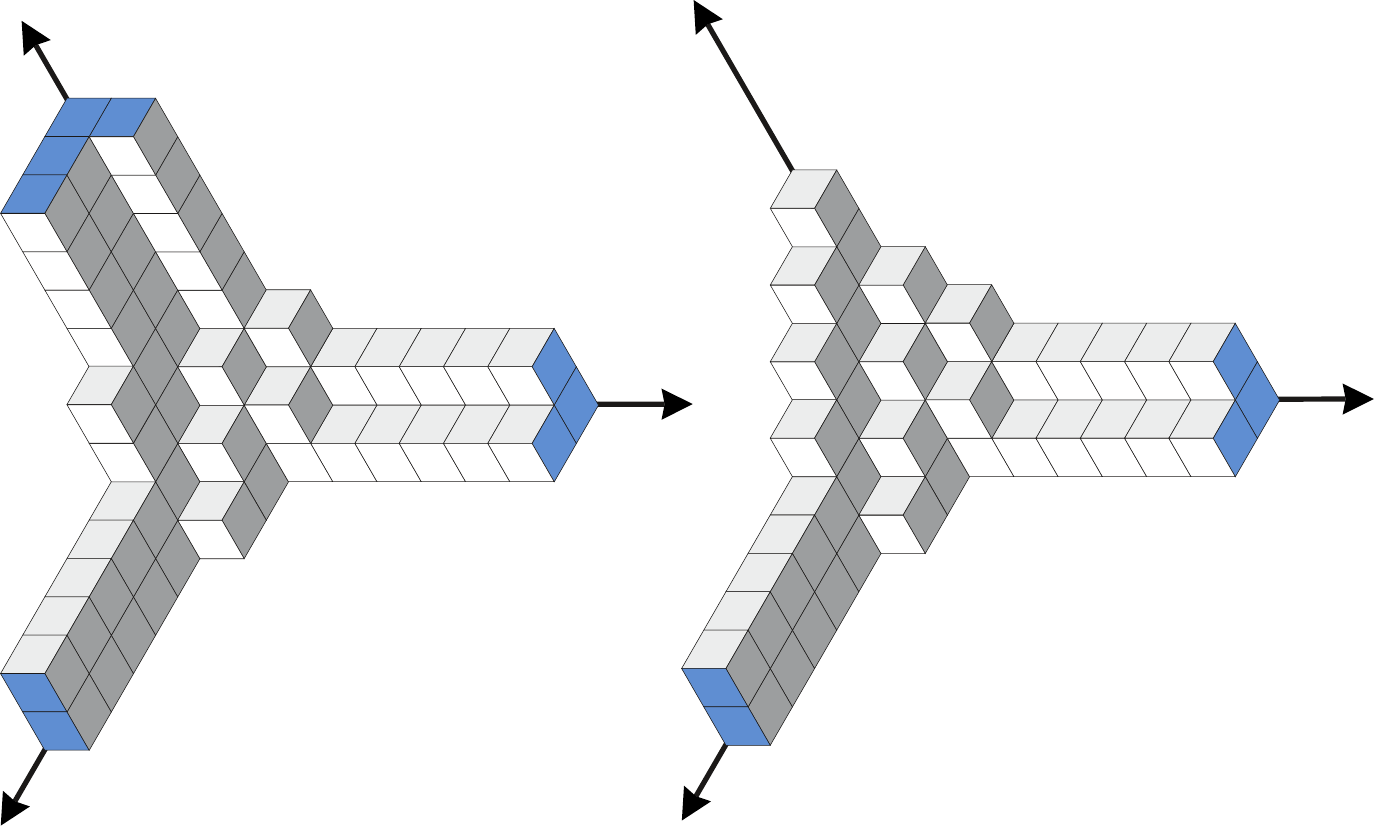}
 \caption{All torus-invariant subschemes of $\C^3$ are asymptotic to
   certain torus-invariant subschemes of $\C^2$, that is, to certain
   partitions, along the coordinate axes. These partitions are shown in blue in
   the figure. When one of them is empty, as in example on
 the right, one talks about a 2-leg vertex.}
  \label{f_vertex}
\end{figure}

Further technical variations of vertices in DT theory come from the
possibility to vary stability conditions for DT moduli spaces. While
early 
papers used the Hilbert scheme of curves in $\C^3$ to define the
vertex, there are many technical advantages to using the
Pandharipande-Thomas (PT) moduli spaces instead \cite{PandharipandeThomas}. Wall-crossing
between different stability chambers have not really been explored in
fully 
equivariant K-theory. However, there is little doubt that Hilbert
scheme and the PT counts in any $X$ differ by an overall factor that comes from
counting $0$-dimensional subschemes in $X$, see \cite{OkLect} for discussion of the
latter count. In this paper, we work with the PT counts. Their only
disadvantage is that they are harder to visualize, which is why
Figure \ref{f_vertex} shows examples of torus-fixed points in the
Hilbert scheme of curves in $\C^3$. 

\subsection{}

While there is an 
in principle understanding of the vertices in terms of the 
Fock space representations of quantum
\emph{double} loop groups (see \cite{OkTak} for an introduction), having a better handle on them would lead to
 a significant theoretical and computational progress. The goal of
this paper is to provide a direct and explicit description of the
vertex with 2 nontrivial legs (as in Figure \ref{f_vertex} on the
right) in a specific gauge. This result is stated as Theorem \ref{t_main} below. 

Given the complexity of the problem,
we find the existence of such an explicit formula quite remarkable. We
also think it is unlikely that a 
comparably direct formula exists for the full 3-valent vertex. 

\subsection{} 

The shape of our formula definitely suggests an interpretation in terms of
counting M2-branes of the M-theory, along the lines explored in
\cite{NO}. Very visibly, \eqref{mainTh} is made up of the contributions of
the three basic curves in the geometry: the two coordinate 
axes and their union. 

We note, however, that formula \eqref{mainTh} refers to relative DT counts
and those currently fall outside of the scope of the conjectural
correspondence with membrane counts proposed in \cite{NO}. 
Thus, the framework of \cite{NO} needs to be expanded and we hope to return to this question in a future paper.

\subsection{} 

As an application of our result, we compute the operator corresponding to the parallel legs in the resolved conifold. It proves that any matrix element of the operator divided by the vacuum matrix element is polynomial in the K\"ahler parameter.

\subsection{}

We thank the Simons Foundation for financial support within the Simons
Investigator program. The results of Section 3 were obtained under
support of the Russian Science Foundation under grant  19-11-00275. 

We thank N.~Nekrasov and A.~Smirnov for valuable discussions.

\section{The setup and the formula}





\subsection{The basic geometry}

\subsubsection{}

The equivariant vertex with 2 legs can be captured by relative
counts in the following threefold
\be
X = S \times \C, \ \  \text{where} \ \ S = \text{Blow-up}_{(0,0)}(\mathbb{P}^1 \times \C)
\ee
The toric diagram of $S$ is drawn on the left in Figure
\ref{f_SX}. The torus
\be
\bT = \Ct_{x} \times \Ct_{y} \times \Ct_{z}
\ee
acts on $X$ with weights as in Figure \ref{f_SX}. We
denote
by 
\be
D_i \cong \C_{y} \times \C_{x z} \,, \quad i=1,2 \,, 
\ee
the two divisors shaded in Figure \ref{f_SX}.
\begin{figure}[!htbp]
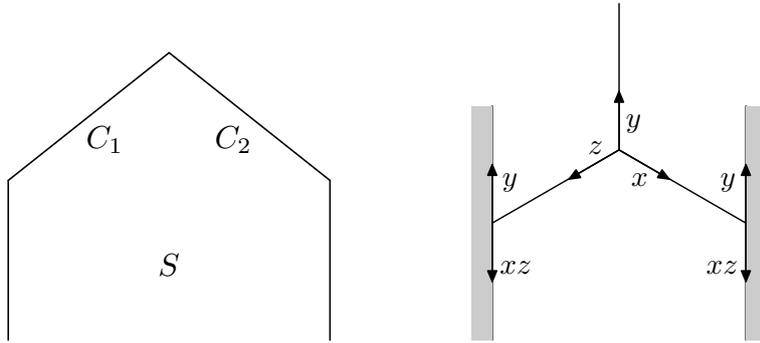

  \centering
  \includegraphics[scale=1.2]{pic5.5} \qquad \qquad
  \includegraphics[scale=1.1]{2l.1}
 \caption{On the left, the toric diagram of the surface $S$ showing the 
complete curves $C_1,C_2\subset S$. On the right, the weights of the torus action on $X$. The shading marks
   the relative divisors $D_1,D_2\subset X$.}
\label{f_SX}
\end{figure}

In computations we use the following notation: for any weight $\mathsf{w}:\bT \to \Ct$ we define
$$
\{\mathsf{w}\}  = \widehat{\mathsf{a}}(\mathsf{w}) = \mathsf{w}^{1/2} - \mathsf{w}^{-1/2},
$$
and extend it multiplicatively to linear combinations.

\subsubsection{}

The Pandharipande-Thomas moduli spaces of stable pairs parametrize
complexes of the form
\begin{equation}
\cO_X  \xrightarrow{\,\, s\,\,} \cF\label{sPT}
\end{equation}
in which $\cF$ is a pure $1$-dimensional sheaf on $X$ and the cokernel
of the section $s$ is a $0$-dimensional sheaf.

If a smooth divisor $D\subset X$ is given, there is a very useful
\emph{relative} modification $\PT(X/D)$ of this moduli space. It parametrizes
complexes of the form \eqref{sPT} on semistable degenerations $X'$ that
allow $X$ to bubble off copies of $\bP(\cO_{D_i} \oplus
N_{X/D_i})$, where $D_i$ is a component of $D$. One get such degenerations in families by blowing up 
$D_i\times \{b\}\subset
X \times B$ in a
trivial family with base $B \owns b$. See e.g.\ \cite{OkLect} for a hands-on
introduction. By allowing degenerations of $X$, one can achieve that
$\Coker(s)$ is supported away from $D$. Therefore, there
is a well-defined map
\begin{equation}
\ev: \PT(X/D) \to \Hilb(D,\textup{points})\label{ev}
\end{equation}
that takes a complex of the form \eqref{sPT} to its
restriction
\begin{equation}
\cO_D  \to \cF_D \overset{\textup{\tiny{def}}}= \cF \otimes \cO_D \to 0 \label{cFD} 
\end{equation}
to the divisor $D$. 
This map records the intersection of curves in $X$ with the divisor $D$. 

In our case, $D=D_1 \sqcup D_2$ has two components. We denote the
components of the evaluation map \eqref{ev} by 
\be
\ev_i: \PT(X/D_1 \sqcup D_2) \to \text{Hilb}(D_i)  \,. 
\ee

\subsubsection{}
The general formalism of perfect obstruction theories gives the PT
moduli spaces their virtual structure sheaves $\cO_\vir$. A small, but
important detail in setting up the K-theoretic DT counts is to use 
a certain \emph{symmetrized} virtual structure sheaf $\tO_\vir$. The
main difference between $\tO_\vir$ and $\cO_\vir$ is a twist by a
square root of the virtual canonical bundle $\cK_\vir$, in parallel to
how a Dirac operator on a K\"ahler manifold $M$ is obtained
from the $\overline{\partial}$-operator in $\Omega^{0,\bullet}(M)$
using 
a twist by $\cK_M^{1/2}$.

\subsubsection{}

The deformation theory of sheaves on a fixed semistable
degeneration $X'$ gives $\PT(X/D)$
a relative obstruction theory over the stack of 
degenerations of $X$. We denote by
\begin{equation}
  \label{Tvir}
  \cT_\vir = \chi_{X'}(\cF) + \chi_{X'}(\cF,\cO_{X'}) -
  \chi_{X'}(\cF,\cF) 
\end{equation}
and $\cK_\vir=\det(\cT_\vir)^{-1}$ the
virtual tangent bundle and the virtual
canonical line bundle of this relative obstruction theory. 

While the general discussion of \cite{NO} about the existence of square
roots may be adopted to the setting at hand, there is a more direct
argument that applies whenever the pair $(X,D)$ is a line bunde over a
surface $(S,\partial S)$, where $\partial S = S \cap D$.

The required
twist will also involve the pull-back
\begin{equation}
  \label{THilb}
  \cT_{\Hilb(D)} = \chi(\cF_D) + \chi_{D}(\cF_D,\cO_{D}) -
  \chi_{D}(\cF_D,\cF_D) 
\end{equation}
of the tangent bundle to $\Hilb(D)$ under the evaluation map
\eqref{ev}. Since $D$ is the anticanonical divisor of $X$, the
virtual dimension of the source in \eqref{ev} is half of the
dimension of the target, and in fact the image is a virtual
Lagrangian. Reflecting this, we will see a
\emph{polarization}, that is, a certain half of the tangent
bundle \eqref{THilb}
in the sense of \eqref{polarHilb} 
in the formulas. 

\subsubsection{}

For simplicity of notation, assume that $X'=X$ and let  $p: X \to S$ denote the projection. A sheaf $\cF$ on $X$ is the
same as its pushforward $p_* \cF$ together with an endomorphism of
$p_*\cF$ given by
the multiplication by the 3rd coordinate, thus
\begin{equation}
\chi_{X}(\cF,\cF) = (1-y) \, \chi_{S}(p_* \cF,p_* \cF)
\,. \label{chi1y}
\end{equation}
We define
\begin{align}
  \cT^{1/2}_{\Hilb(D)} &= \chi(\cF_D) - \chi_{\partial S}(p_* \cF_D,
                         p_*\cF_D)  \,, \label{T12H}\\
  \label{T12vir}
  \cT^{1/2}_\vir &= \chi(\cF(-D)) - \chi_{S}(p_*\cF,p_*\cF(-D))
  \,.
\end{align}
These are virtual bundles of ranks
\begin{align}
\rk  \cT^{1/2}_{\Hilb(D)} &= ([\cF],D)_X = \textup{vir
                            dim}  \, \PT(X) \,,
  \\
  \rk  \cT^{1/2}_\vir & =
  \chi(\cF) - ([\cF],D)_X + (p_*[\cF],p_*[\cF])_S \,, 
\end{align}
where $(\,\cdot\,,\,\cdot\,)_S$ denotes the intersection form on curve
classes in the surface $S$.

The first half of the following proposition shows that \eqref{T12H} is
a polarization of $\Hilb(D)$, that is, an equivariant half of the
tangent bundle. 

\begin{prop}  
  \begin{align}
    \label{polarHilb}
    \cT_{\Hilb(D)} &= \cT^{1/2}_{\Hilb(D)}  + xyz 
\left(\cT^{1/2}_{\Hilb(D)}\right)^\vee \\ 
    \cT_\vir &= \cT^{1/2}_\vir - xyz
               \left(\cT^{1/2}_{\vir}\right)^\vee + \cT^{1/2}_{\Hilb(D)}\,.
               \label{cTvir_sum}
  \end{align}
\end{prop}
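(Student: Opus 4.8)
The plan is to deduce both identities from three ingredients: equivariant Serre duality on each of the spaces $X$, $D$, $S$, $\partial S$; the dimensional-reduction identity \eqref{chi1y} together with its evident analogue $\chi_D(\cF_D,\cF_D)=(1-y)\,\chi_{\partial S}(p_*\cF_D,p_*\cF_D)$ on the divisor $D=\partial S\times\C_y$; and the restriction sequence $0\to\cF(-D)\to\cF\to\cF_D\to 0$ on $X$, which gives $\chi(\cF)=\chi(\cF(-D))+\chi(\cF_D)$. Throughout, $(\,\cdot\,)^\vee$ inverts equivariant weights, and Serre duality on a space $V$ of dimension $n$ reads $\chi_V(\cA,\cB)=(-1)^n\chi_V(\cB,\cA\otimes\cK_V)^\vee$. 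The factor $xyz$ appearing in both formulas is nothing but the equivariant weight of $\cK_{D}^{-1}$ for each component $D_i\cong\C_y\times\C_{xz}$, equivalently the weight by which the anticanonical identification $\cO_X(D)\cong\cK_X^{-1}$ fails to be equivariantly trivial; this is the one genuinely geometric input and is read off from the weights in Figure \ref{f_SX}.

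For \eqref{polarHilb} I would start from the definition \eqref{THilb} and treat the middle term by Serre duality on the surface $D$. Since each component $D_i\cong\C^2$ has trivial canonical bundle of weight $\cK_{D}=(xyz)^{-1}$, this gives $\chi_D(\cF_D,\cO_D)=xyz\,\chi(\cF_D)^\vee$, so that $\cT_{\Hilb(D)}=\chi(\cF_D)+xyz\,\chi(\cF_D)^\vee-\chi_D(\cF_D,\cF_D)$. Next I apply the dimensional reduction on $D$ to write $\chi_D(\cF_D,\cF_D)=(1-y)\,\chi_{\partial S}(\cG,\cG)$ with $\cG=p_*\cF_D$. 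Comparing with $\cT^{1/2}_{\Hilb(D)}+xyz\,(\cT^{1/2}_{\Hilb(D)})^\vee$, whose cohomological terms already match $\chi(\cF_D)+xyz\,\chi(\cF_D)^\vee$, reduces the claim to the single identity $(1-y)\,\chi_{\partial S}(\cG,\cG)=\chi_{\partial S}(\cG,\cG)+xyz\,\chi_{\partial S}(\cG,\cG)^\vee$, i.e.\ $\chi_{\partial S}(\cG,\cG)=-xz\,\chi_{\partial S}(\cG,\cG)^\vee$. This last equality is exactly one-dimensional Serre duality on $\partial S$, whose canonical weight is $(xz)^{-1}$, and this closes the argument.

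For \eqref{cTvir_sum} I would again open with Serre duality, now on the threefold $X$: $\chi_X(\cF,\cO_X)=-\chi_X(\cO_X,\cF\otimes\cK_X)^\vee=-xyz\,\chi(\cF(-D))^\vee$, using the equivariant anticanonical relation $\cK_X=\cO_X(-D)\otimes(xyz)^{-1}$. Feeding this and \eqref{chi1y} into \eqref{Tvir}, and splitting $\chi(\cF)=\chi(\cF(-D))+\chi(\cF_D)$ via the restriction sequence, I match the purely cohomological contributions on the two sides directly against $\cT^{1/2}_\vir$, $-xyz(\cT^{1/2}_\vir)^\vee$ and the $\chi(\cF_D)$ summand of $\cT^{1/2}_{\Hilb(D)}$. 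What remains is an identity among the self-Ext terms on the surface $S$: writing $A=p_*\cF$ and using the projection formula $p_*\cF(-D)=A(-\partial S)$, I bring in Serre duality on $S$ in the form $\chi_S(A,A(-\partial S))=\mu\,\chi_S(A,A)^\vee$, where $\mu$ is the weight with $\cO_S(-\partial S)=\cK_S\otimes\mu$, together with the excess/adjunction identity $\chi_S(A,A)-\chi_S(A,A(-\partial S))=(1-N)\,\chi_{\partial S}(\cG,\cG)$ coming from the normal bundle $N=\cO_S(\partial S)|_{\partial S}$ of $\partial S\subset S$. Substituting the relation $\chi_{\partial S}(\cG,\cG)=-xz\,\chi_{\partial S}(\cG,\cG)^\vee$ established above then collapses everything to the asserted sum.

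The main obstacle, and the part requiring the most care, is the bookkeeping of equivariant weights: one must check from the toric data that $\cK_X=\cO_X(-D)\otimes(xyz)^{-1}$, that $\cK_{D}$ and $\cK_{\partial S}$ carry weights $(xyz)^{-1}$ and $(xz)^{-1}$, and that the normal-bundle weight $N$ and the twist $\mu$ conspire so that the $S$- and $\partial S$-level terms recombine into precisely $-xyz(\cT^{1/2}_\vir)^\vee$ and $-\chi_{\partial S}(\cG,\cG)$ with no leftover. A secondary technical point is justifying the restriction sequence and the derived pullback $Li^*$ without stray $\Tor$-contributions, which holds because $\cF$ is pure of dimension one and, after the degenerations built into $\PT(X/D)$, meets $D$ so that $\Coker(s)$ is supported away from $D$; the same degenerations guarantee that all the Euler characteristics above are honestly finite-dimensional, so that Serre duality and \eqref{chi1y} apply.
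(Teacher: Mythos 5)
Your strategy is the same as the paper's: the paper's entire proof is a one-line citation of \eqref{chi1y}, Serre duality, and the equivariant identities $\cK_X=\frac{1}{xyz}\,\cO_X(-D)$, $\cK_S=\frac{1}{xz}\,\cO_S(-\partial S)$, and your argument is an expansion of exactly these ingredients. Your derivation of \eqref{polarHilb} is correct: two-dimensional Serre duality on $D$ gives $\chi_D(\cF_D,\cO_D)=xyz\,\chi(\cF_D)^\vee$, the dimensional reduction $\chi_D(\cF_D,\cF_D)=(1-y)\chi_{\partial S}(\cG,\cG)$ with $\cG=p_*\cF_D$ is the legitimate analogue of \eqref{chi1y}, and one-dimensional Serre duality $\chi_{\partial S}(\cG,\cG)=-xz\,\chi_{\partial S}(\cG,\cG)^\vee$ closes it. However, there is a genuine error in your key step for \eqref{cTvir_sum}. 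After your (correct) reductions --- $\chi_X(\cF,\cO_X)=-xyz\,\chi(\cF(-D))^\vee$, the restriction sequence, \eqref{chi1y}, and $\chi_S(A,A(-\partial S))=xz\,\chi_S(A,A)^\vee$ with $A=p_*\cF$ --- the identity that remains to be proved is precisely
\begin{equation*}
\chi_S(A,A)-\chi_S(A,A(-\partial S))\;=\;\chi_{\partial S}(\cG,\cG)\,,
\end{equation*}
with coefficient exactly $1$, not your ``excess/adjunction identity'' carrying the factor $(1-N)$.

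The left side equals $\chi_S\bigl(A,\,A\otimes^{L}\cO_{\partial S}\bigr)$, and by the adjunction $(Li^*,i_*)$ for $i:\partial S\hookrightarrow S$ this is $\chi_{\partial S}(Li^*A,\cG)$. Since $\cF$ is admissible relative to $D$ (no subsheaf supported on $D$, equivalently $\operatorname{Tor}_1^{\cO_X}(\cF,\cO_D)=0$, whence $\operatorname{Tor}_1^{\cO_S}(A,\cO_{\partial S})=p_*\operatorname{Tor}_1^{\cO_X}(\cF,\cO_D)=0$ because $\cO_D=p^*\cO_{\partial S}$ and $p$ is affine), we get $Li^*A=\cG$, so the adjunction yields $\chi_{\partial S}(\cG,\cG)$ on the nose. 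The $(1-N)$ correction you invoke is the self-intersection formula $Li^*i_*\cG=\cG-N^\vee\otimes\cG$, which computes $\chi_S(i_*\cG,i_*\cG)=(1-N)\,\chi_{\partial S}(\cG,\cG)$ --- the self-Ext of a sheaf \emph{pushed forward from the divisor} --- but the quantity needed here is $\chi_S(A,i_*\cG)$ with $A$ not supported on $\partial S$. With your version, the final assembly leaves an uncancelled term $-N\,\chi_{\partial S}(\cG,\cG)$, and no application of $\chi_{\partial S}(\cG,\cG)=-xz\,\chi_{\partial S}(\cG,\cG)^\vee$ removes it (it is already nonzero in the $1$-leg example $\cE_d$), so the claim that this substitution ``collapses everything'' is false. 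Replacing your excess identity by the Tor-vanishing adjunction above repairs the proof, after which it coincides with the paper's intended argument.
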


\begin{proof}
Follows from \eqref{chi1y}, Serre duality, and the equivariant
identifications
$$
\cK_X = \frac{1}{xyz} \, \cO_X(-D) \,, \quad \cK_S = \frac{1}{xz} \,
\cO_S(-\partial S) \,.
$$
\end{proof}

\subsubsection{}
To have the required twist defined in equivariant K-theory, we pass to
the cover $\bTt$ of the torus $\bT$ with characters
$$
\kappa = \sqrt{xyz} \ \ \textup{and}  \ \ \sqrt{y}\,. 
$$
With this we define
\begin{equation}
\tO_\vir = (-q)^{\chi(\cF)} \, \kappa^{\rk  \cT^{1/2}_\vir } \,
\det \left(\cT^{1/2}_\vir\right)^{-1} \otimes (\sqrt{xz})^{-|\cF|_{D_1}|^2 - |\cF|_{D_2}|^2} \otimes \cO_\vir
\,, \label{tOvir} 
\end{equation}
in other terms
\begin{equation}
\tO_\vir = (-q)^{\chi(\cF)} \otimes \left(
\cK \otimes \left(\det \text{H}^\bullet( \cF|_D) \right)
\right)^{1/2}
 \otimes \cO_\vir,
\end{equation}
where $q$ is the boxcounting variable in the DT theory. 

Our main object of study is the correspondence in defined by
\begin{equation}
  \label{eq:1}
   \ev_* \tO_\vir \in K_\textup{eq}(\Hilb(D))[[q]] \,. 
\end{equation}

\subsubsection{Example}

Consider sheaves in the class $d \cO_{C_1}$ with the smallest possible euler characteristic $\chi = d$. 
The moduli space is isomorphic to $S^d \mathbb{A}^1 = \mathbb{A}^d$. There is one fixed point $\cE_d$ for each $d$, and  we can compute the characters of the tangent space to the Hilbert scheme:
$$
\left.\cT_{\Hilb(D)}\right|_{\cE_d} = \left.\cT_{S^d \mathbb{A}^1}\right|_{\cE_d} + xyz \left.\cT_{S^d \mathbb{A}^1}^\vee \right|_{\cE_d} = \sum_{i=1}^d \left( y^i + \frac{x y z}{y^i} \right) ,
$$
The normal bundle to $\mathbb{P}^1 \subset X$ with support $C_1$ is $\cO \oplus \cO(-1)$, and the second bundle is acyclic, that is why
$$ 
\left.\cT_\vir\right|_{\cE_d} = \sum_{i=1}^d y^i.
$$
By (\ref{T12H}) and (\ref{T12vir}),
$$
\left.\cT^{1/2}_{\Hilb(D)}\right|_{\cE_d} = \frac{1-y^{-d}}{1-y^{-1}} \left(
1-\frac{1-y^d}{1-y}(1-xz),
\right)
$$
$$
\left.\cT^{1/2}_\vir \right|_{\cE_d} =  (-xz) \frac{\{y^d\}^2}{\{y\}^2},
$$
$$
\left.\rk \cT_\vir^{1/2}\right|_{\cE_d} = -d^2.
$$
For the stalk of the symmetrized structure sheaf at $\cE_d$ we get
$$
\left.\tO_\vir\right|_{\cE_d} = y^{-d^2/2}
$$
in full agreement with
$$
(\cK \otimes \cO(1)_{\Hilb(D_1)})^{1/2}.
$$

\subsubsection{Example}

Consider sheaves in the class $d \cO_{C1 \cup C2}$ with the smallest possible $\chi = d$. These sheaves are obtained as pullbacks of sheaves on $\mathbb{P}^1 \times \C^2$, where we consider the component of degree $d$ and $\chi=d$.
Hence, the moduli space in this case is isomorphic to the diagonal in the product of two Hilbert scheme of $d$ points in $\C^2$.
The fixed points can be identified with Young diagrams $\lambda = (\lambda_0 \geq \lambda_1 \geq ...)$ which we denote by by $\cE_\lambda$. Characters of the tangent space is given by the well-known arms-legs formula
$$
\left.\cT_\vir\right|_{\cE_\lambda} = \sum_{\square \in \lambda} \left(
(xz)^{-a(\square)} y^{l(\square)+1} + (xz)^{a(\square)+1} y^{-l(\square)}
\right),
$$
$$
\left.\cT_{\Hilb(D)}\right|_{\cE_\lambda} = 2\cdot\sum_{\square \in \lambda} \left(
(xz)^{-a(\square)} y^{l(\square)+1} + (xz)^{a(\square)+1} y^{-l(\square)}
\right).
$$
Here we have a factor of 2 because of 2 copies of $\C^2$.
By (\ref{T12H}) and (\ref{T12vir}),
$$
\left.\cT^{1/2}_{\vir}\right|_{\cE_\lambda} = -\frac{1}{2}\cdot\cT^{1/2}_{\Hilb(D)}(\cE_\lambda),
$$
$$
\left.\rk \cT^{1/2}_\vir\right|_{\cE_\lambda} = -|\lambda|.
$$
The stalk of the symmetrized structure sheaf is equal to
$$
\left.\tO_\vir\right|_{\cE_\lambda} = {\kappa^{-|\lambda|}} \cdot \prod_\square (xz)^{-a'(\lambda)} y^{-l'(\square)} = \left.{\kappa^{-|\lambda|}} \cdot  \cO(1)_{\Hilb(D)}\right|_{\cE_\lambda}.
$$

\subsection{Symmetric functions}

\subsubsection{}

Theorems of Bridgland-King-Reid \cite{BKR} and Haiman \cite{Haiman} give an
equivalence
\begin{equation}
D^b \Coh_\bT \Hilb(\C^2,n) \cong D^b \Coh_{S(n) \times \bT}
\C^{2n} \label{BKRH}
\end{equation}
and hence a natural identification of equivariant $K$-theories. The
Fourier-Mukai kernel of this identification is $\cO_\cZ$, where $\cZ$
is the universal subscheme 
$$
\mathscr{Z} \subset \Hilb(\C^2,n) \times \C^{2n}
$$
provided by Haiman's identification of $\Hilb(\C^2,n)$ with the
Hilbert scheme of regular orbits for the diagonal action of $S(n)$ on
$\C^{2n}$.

\subsubsection{}

We use the identification
$$
K_{S(n) \times \bT} (\C^{2n})_\textup{compactly supported} \xrightarrow{\,\,\sim\,\,} K_{S(n) \times
  \bT} (\pt) \cong \Lambda_n \otimes K_\bT(\pt).
$$
given by the global sections and extend it to the suitable
localization of the right-hand side for all sheaves. Here
$\Lambda_n$ denotes symmetric functions of degree $n$.

The identification of $K_{S(n)}(\pt)$ with symmetric functions sends a
module
$W$ to the symmetric function $f_W$ such that
\begin{equation}
(f_W, p_\mu) = \tr_W \, \sigma_\mu \,,\label{fW}
\end{equation}
where
$$
\sigma_\mu = 
\textup{permutation of cycle type $\mu$} \,, 
$$
the functions $p_\mu = \prod p_{\mu_i}$ are the power-sum symmetric
functions, 
and the inner product is the standard inner product on $\Lambda_n$. 
In particular, the sheaf
$$
W^\lambda \otimes \cO_0 \in K_{S(n) \times \bT} (\C^{2n})\,, 
$$
where $W^\lambda$ is an irreducible $S(n)$-module labeled by a diagram
$\lambda$ of size $n$, corresponds to the Schur function $s_\lambda\in \Lambda_n$. 

\subsubsection{}\label{s_ab}

Let $\bT$ acts on $\C^2$ by 
$$
\begin{pmatrix}
  t_1 & \\
  & t_2 
\end{pmatrix} \cdot
\begin{pmatrix}
  a \\
  b
\end{pmatrix} = \begin{pmatrix}
  t_1 a \\
  t_2 b
\end{pmatrix}
$$
and let
$$
L = \{ b_1=\dots=b_n=0\} \subset \C^{2n}
$$
be the half-dimensional subspace defined
by the vanishing of the coordinates
of
weight $t_2$. We have
$$
\tr_{\cO_{L}}  \sigma_\mu = \prod_i \frac 1{1-t_1^{-\mu_i}}
$$
and therefore for any $S(n)$-module $W$
\begin{equation}
f_{W\otimes \cO_0} =  f_{W \otimes \cO_L} \Big|_{p_k \mapsto
  (1-t_1^{-k}) p_k} \label{e_plet} \,. 
\end{equation}
Substitutions of the kind \eqref{e_plet} are known in the literature
as plethystic substitutions.

\subsubsection{}\label{s_Ed}

Induction and restriction of representations give symmetric functions 
$$
\Lambda = \bigoplus_{n=0}^\infty \Lambda_n
$$
their natural multiplication and comultiplication, see
\cite{Zel}.

It is possible to produce correspondences between Hilbert
schemes of different size that act as multiplication by a certain
symmetric function.
Concretely, let $b$ be the coordinate of weight $t_2$ and consider the
correspondence 
\be
E_d = \{ I_1 \subset I_2 |  \ b \cdot (I_2/I_1) = 0, \, \dim(I_2/I_1)
= d \} \subset \text{Hilb}^* \times \text{Hilb}^{*+d} \,, 
\ee
which can be shown to be smooth and Lagrangian. The following
Proposition may be deduced e.g.\ from the computations in \cite{Negut}.

\begin{prop}\label{p_ed} 
 The correspondence $E_d$ acts as multiplication by a symmetric
 function. 
\end{prop}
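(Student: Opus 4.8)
The plan is to promote $E_d$ to an operator $\Phi\colon\Lambda\to\Lambda$ (kernel $\cO_{E_d}$, summed over the size of the Hilbert scheme) and to show that $\Phi$ is a homomorphism of $\Lambda$-modules, where $\Lambda$ acts on itself by the induction product of \cite{Zel}. Since $\Lambda$ is commutative, any $\Lambda$-linear endomorphism $\Phi$ satisfies $\Phi(f)=\Phi(f\cdot 1)=\Phi(1)\cdot f$, so $\Phi$ is automatically multiplication by the single symmetric function $g:=\Phi(1)$; in other words $\operatorname{End}_\Lambda(\Lambda)\cong\Lambda$ is exactly the algebra of multiplication operators, and $\Lambda$-linearity is the whole content. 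The candidate $g$ is moreover easy to name: evaluating $\Phi$ on $1\in\Lambda_0=K(\Hilb^0)$ restricts $E_d$ to the fiber over the empty subscheme, which is the locus of length-$d$ subschemes scheme-theoretically supported on the line $\{b=0\}=\C_a$, i.e.\ $S^d\C_a$. Under \eqref{BKRH} this locus corresponds to $\cO_L$ with $L=\{b_1=\dots=b_d=0\}$, so by \eqref{fW} the function $g$ is the degree-$d$ part of $f_{\cO_L}$, determined by $\tr_{\cO_L}\sigma_\mu=\prod_i(1-t_1^{-\mu_i})^{-1}$.

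Because $\Lambda$ is generated as a ring by the power sums, it suffices to prove $[\Phi,\,m_{p_k}]=0$ for every $k$, where $m_{p_k}$ is multiplication by $p_k$. Here I would transport the induction product from $\bigoplus_n K_{S(n)}(\pt)$ to the Hilbert-scheme side via \eqref{BKRH}, under which $m_{p_k}$ becomes the Nakajima-type creation correspondence adding a length-$k$ string (this is the standard identification, in the Fock space $\Lambda$, of multiplication by $p_k$ with the creation mode $a_{-k}$). One then composes correspondences: both $\Phi\circ m_{p_k}$ and $m_{p_k}\circ\Phi$ are pushforwards from iterated nested-ideal loci $I_1\subset I'\subset I_2$, and the desired commutation is the statement that these two composite kernels coincide in (localized) equivariant $K$-theory.

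The main obstacle is precisely this commutation. The two nestings do not have disjoint support, so the composition of the correspondences is not a transverse fiber product and carries excess-intersection/derived corrections that must be tracked carefully in $K$-theory. This bookkeeping is exactly what is carried out in \cite{Negut}: there the operators $E_d$ are computed in closed form, and $\sum_d \Phi_d\, z^d$ comes out as a normally ordered expression involving only the creation modes $a_{-k}$, with no annihilation part. From that form the vanishing of $[\Phi_d,m_{p_k}]$ is manifest—$\Phi_d$ is a polynomial in the $a_{-k}$, hence a multiplication operator—which is the proposition. I would therefore deduce the statement by quoting Negut's formula and reading off the absence of annihilation modes, using the explicit $g=\Phi(1)$ above as a normalization cross-check and verifying the smallest cases ($d=1$, and $E_d$ paired against a single $p_k$) directly by equivariant localization on pairs of partitions.
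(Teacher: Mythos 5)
Your proposal is correct and matches the paper's treatment: the paper likewise offloads the essential content (that the nested-ideal correspondence $E_d$, after the compositions are tracked in equivariant $K$-theory, is a purely ``creation-mode'' operator) to the computations of \cite{Negut}, and then pins down the symmetric function exactly as you do, by evaluating on the empty subscheme to get $\cO_L$, whose character $\tr_{\cO_L}\sigma_\mu=\prod_i(1-t_1^{-\mu_i})^{-1}$ yields the degree-$d$ part of \eqref{HHd}. Your added scaffolding ($\End_\Lambda(\Lambda)\cong\Lambda$ and the reduction to $[\Phi,m_{p_k}]=0$) is sound but ultimately redundant once Negut's creation-only formula is invoked, which is why the paper states the deduction without it.
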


\subsubsection{}
To determine the symmetric function in \eqref{p_ed}, it suffices
to apply the correspondence to the Hilbert scheme of $0$ points, in
which case we get structure sheaf $\cO_L$ as the corresponding
element of $K_{S(n) \times \bT} (\C^{2n})$. This corresponds to the
trivial module $W=\C$ in \eqref{e_plet}, and thus to the complete
homogeneous symmetric function $h_n = s_{(n)}$. From the
generating function
$$
\sum_n h_n(x) = \prod_i \frac1{1-x_i}
$$
we conclude that $E_d$ multiplies by the degree $d$ component of
\begin{multline}
  \label{HHd}
  \prod_i \prod_{m\ge 0} \frac1{(1-t_1^{-m} x_i)} =
  \exp \left(\sum \frac{p_k} 
    {k(1-t_1^{-k})} \right) = \\
  = \sum \frac{H_{(d)}}{(1-t_1^{-1}) \dots (1-t_1^{-d})} \,. 
\end{multline}
Here $H_{(d)}$ is the Macdonald polynomial in Haiman's normalization,
it corresponds to the unique fixed points in
\begin{equation}
  L/S(n) \cong \Hilb(\C^1,n) \cong \C^n
  \subset \Hilb(\C^2,n) \label{imL}
\end{equation}
The
denominator is given by the tanget weights to \eqref{imL}. The
correspondence $E_d$ maps to \eqref{imL} by the class of $I_2/I_1$
and, in principle, we can pull back the point class instead of the
structure sheaf from there. That would give multiplication by
$H_{(d)}$.


\subsection{Symmetric algebras}

\subsubsection{}

Let $V$ be a representation of a group $G$. The symmetric algebra
of $V$
$$
\bSd V = \sum_k \left(V^{\otimes k}\right)^{S(k)}
$$
is a representation of $V$ with character
\begin{equation}
\tr_{\bSd V} \, g = \exp\left( \sum_n \frac1n \tr_V \, g^n 
\right)\label{Sdg} \,. 
\end{equation}
Because of the relation
$$
\bSd (V_1 \oplus V_2) = \bSd V_1 \, \otimes \bSd V_2
$$
it is enough to check \eqref{Sdg} for a $1$-dimensional module of some
weight $w$, in which case it gives
$$
\frac{1}{1-w} = \exp\left(\sum_n \frac1n w^n \right) \, .
$$
The operation $\bSd$ extends naturally to $K_G(\pt)$, that is, to
virtual representations of $G$, by
the rule
$$
\bSd (-V) = \sum_k (-1)^k {\bigwedge}^{\!\!k} \, V \,.
$$

\subsubsection{}
Now suppose $V$ is direct sum of representations of $G$ with action of
some symmetric group $S(k)$, $k\ge 1$, that is, 
$$
V = \sum V_k \in \bigoplus_{k=1}^\infty
K_{G\times S(k)}(\pt) \subset K_G(\pt)\otimes \Lambda \,. 
$$
We define
\begin{equation}
\sSd  V  = \bigoplus_{\mathbf{k}=\{k_1,k_2, \dots\} } \Ind^{S(\sum
  k_i)}_{\Aut(\mathbf{k}) \ltimes \prod S(k_i)} \bigotimes_{k_i\in \mathbf{k}}
V_{k_i}  \,.\label{sSdd}
\end{equation}
where the sum is over all multisubsets, that is, subsets with
repetitions, $\mathbf{k}$ of $\{1,2,\dots\}$ and the group
$\Aut(\mathbf{k})$ permutes equal parts of $\mathbf{k}$.
The representation \eqref{sSdd} is the only natural representation
that can be made out of unordered collections of representations
$V_{k_i}$.

\subsubsection{}
For any pair of groups $H\subset G$, Frobenius reciprocity implies
$$
\left(\Ind^G_H V\right)^G =V^H \,.
$$
This yields the following compatibility between $\bSd$ and $\sSd$
\begin{equation}
  \xymatrix{
    K_G(\pt)\otimes \Lambda \ar[dd]_{\sSd}
    \ar[rrr]^{\textup{invariants}}  &&&K_G(\pt) \ar[dd]^{\bSd} \\ \\
    \overline{K_G(\pt)\otimes \Lambda} \ar[rrr]^{\textup{invariants}}
    &&&
    \overline{K_G(\pt)} 
}\,. \label{SVSV}
\end{equation}
The bars in the bottom line of the diagram \eqref{SVSV}
stand for the required completions.

\subsubsection{}

Take $f(g) \in K_G(\pt)\otimes \Lambda$, where the argument $g$
denotes an element of the group $G$.  Define the Adams operations
$\Psi_n$ by
$$
\Psi_n f = f(g^n) \Big|_{p_k\mapsto p_{kn}\,, \forall k} \,.
$$

\begin{lem}
  \begin{equation}
    \label{fsSd}
    \sSd f = \exp \left(\sum \frac1n  \Psi_n f 
    \right) \,. 
  \end{equation}
\end{lem}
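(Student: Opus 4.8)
The plan is to mirror, one categorical level up, the elementary derivation of \eqref{Sdg}: there one uses that $\bSd$ sends $\oplus$ to $\otimes$ to reduce to a one-dimensional module $w$, where the claim is the geometric series $\tfrac{1}{1-w}=\exp\bigl(\sum_n w^n/n\bigr)$. Both sides of \eqref{fsSd} are again exponential in this sense. For the right-hand side it is immediate: $\Psi_n$ is additive, so $f\mapsto\exp\bigl(\sum_n\tfrac1n\Psi_n f\bigr)$ carries sums to products. For the left-hand side I would first record the matching property $\sSd(f'+f'')=\sSd(f')\cdot\sSd(f'')$. Writing a multisubset $\mathbf k$ of $\{1,2,\dots\}$ through the multiplicities $m_j$ of its parts identifies the stabilizer appearing in \eqref{sSdd},
\[
  \Aut(\mathbf k)\ltimes\prod_i S(k_i)\;\cong\;\prod_j\bigl(S(j)\wr S(m_j)\bigr),
\]
and induction in stages factors \eqref{sSdd} as a product over the distinct part-values $j$. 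Hence it suffices to prove \eqref{fsSd} for $f$ homogeneous of a single degree $j$, that is, for $V=V_j$ concentrated in one $K_{G\times S(j)}(\pt)$.

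In this homogeneous case \eqref{sSdd} collapses to
\[
  \sSd V_j=\bigoplus_{m\ge0}\Ind^{S(jm)}_{S(j)\wr S(m)}V_j^{\otimes m},
\]
with $S(m)$ permuting the $m$ tensor factors without sign. The crux, and what I expect to be the main obstacle, is to compute the symmetric function attached to the $m$-th summand by \eqref{fW} and to see that it equals the plethysm $h_m[f_j]$, where $f_j=f_{V_j}$ is now also a function of $g\in G$. I would obtain this from the induced-character formula for $\tr(g,\sigma_\mu)$, summing over the ways in which $\sigma_\mu\in S(jm)$ meets a conjugate of $S(j)\wr S(m)$. The mechanism to verify carefully is the block fusion: a cycle of $\sigma_\mu$ running through $n$ of the $m$ blocks glues the corresponding $n$ copies of $V_j$, and the trace of a cyclic shift composed with $(g,\,\cdot\,)^{\otimes n}$ on $V_j^{\otimes n}$ replaces $g$ by $g^n$ and, on the internal $S(j)$-symmetry, sends $p_k\mapsto p_{kn}$. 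This is precisely the Adams operation $\Psi_n$, so the generating sum over $\mathbf k$ organizes itself into the plethystic exponential.

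It then remains to invoke the classical generating identity $\sum_{m\ge0}h_m=\exp\bigl(\sum_{n\ge1}p_n/n\bigr)$, read plethystically with $p_n\mapsto\Psi_n f_j$: this gives
\[
  \sum_{m\ge0}h_m[f_j]=\exp\Bigl(\sum_{n\ge1}\tfrac1n\,\Psi_n f_j\Bigr),
\]
which is \eqref{fsSd} in the homogeneous case; reassembling over the degrees $j$ by the multiplicativity of the first paragraph finishes the proof in the appropriate completion $\overline{K_G(\pt)\otimes\Lambda}$. As a consistency check one can apply the invariants functor of \eqref{SVSV}, which on symmetric functions is the specialization $p_k\mapsto1$: it turns $\Psi_n$ into the ordinary operation $g\mapsto g^n$ and reduces \eqref{fsSd} to \eqref{Sdg}. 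This does not by itself prove \eqref{fsSd}, since $p_k\mapsto1$ is far from injective, but it confirms the shape of the formula; the genuine content is the block-fusion computation of the preceding paragraph, which upgrades the scalar Adams operation of \eqref{Sdg} to the $p$-shifting operation $\Psi_n$.
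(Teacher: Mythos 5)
Your proof is correct, but it takes a genuinely different route from the paper. You prove the lemma intrinsically: multiplicativity of both sides (via the wreath-product identification $\Aut(\mathbf{k})\ltimes\prod_i S(k_i)\cong\prod_j S(j)\wr S(m_j)$ and induction in stages) reduces to a single homogeneous piece $V_j$, where you compute the induced character of $\Ind^{S(jm)}_{S(j)\wr S(m)}V_j^{\otimes m}$ by cycle fusion and identify it with the plethysm $h_m[f_{V_j}]$; the generating identity $\sum_m h_m=\exp(\sum_n p_n/n)$ then yields \eqref{fsSd}. This block-fusion computation is the classical wreath-product/plethysm formula (Macdonald, Ch.\ I, App.\ A), correctly upgraded to carry the $G$-action, so that $p_n[\,\cdot\,]$ becomes $\Psi_n$ (including $g\mapsto g^n$, consistent with \eqref{Psiq}); your argument in effect identifies $\sSd$ as the plethystic exponential, which is conceptually illuminating and self-contained. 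The paper instead sidesteps all wreath-product character theory: it evaluates $f_W$ in $N$ variables via $f_W(x_1,\dots,x_N)=\tr_{V_W}\mathbf{x}$ with $V_W=\left((\C^N)^{\otimes k}\otimes W\right)^{S(k)}$, observes $\Psi_n f_W=\tr_{V_W}\mathbf{x}^n$, and then the diagram \eqref{SVSV} reduces \eqref{fsSd} directly to the already-proved \eqref{Sdg} — a three-line proof whose price is the auxiliary Schur--Weyl-type coupling. One refinement to your closing remark: you rightly note that the naive invariants map $p_k\mapsto 1$ is far from injective and so only gives a consistency check; but the paper's use of \eqref{SVSV} is not this naive specialization — by first tensoring with $(\C^N)^{\otimes k}$ the torus variables $x_1,\dots,x_N$ are absorbed into the group, and evaluation of symmetric functions at $N$ variables for all $N$ \emph{is} faithful, which is exactly how the diagram becomes a complete proof rather than a check. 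Your route buys an explicit combinatorial mechanism (Adams operations arising from cycles gluing blocks); the paper's buys brevity and an automatic treatment of the completion.
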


\begin{proof}
  The identification \eqref{fW} of symmetric functions with
  representation of the symmetric group may be restated as follows.
  Let $W$ be a representation of $S(k)$ and 
  suppose we want to evaluate $f_W$ as a symmetric polynomial
  of some variables $x_1,\dots,x_N$. This evaluation is given by
  $$
  f_W(x_1,x_2,\dots) = \tr_{V_W} \mathbf{x} \,, \qquad 
  \mathbf{x} = \begin{pmatrix}
    x_1 \\
    & x_2 \\
    && \ddots 
  \end{pmatrix}
\,,
$$
where
$$
V_W = \left((\C^N)^{\otimes k} \otimes W\right)^{S(k)} \,.
$$
It follows that
$$
\Psi_n f_W = \tr_{V_W} \mathbf{x}^n \,. 
$$
The diagram \eqref{SVSV} thus reduces \eqref{fsSd} to \eqref{Sdg} \,. 
\end{proof}

\subsubsection{}
Given a collection $\cF=\{\cF_k\}$ of 
$S(k)\times \bT$-equivariant sheaves $\cF_k$ on $\C^{2k}$, the
procedure \eqref{sSdd} outputs a new collection of sheaves that
we denote $\sSd \cF$. More precisely,
\begin{equation}
\sSd  \cF  = \bigoplus_{\mathbf{k}=\{k_1,k_2, \dots\} } \Ind^{S(\sum
  k_i)}_{\Aut(\mathbf{k}) \ltimes \prod S(k_i)}
\cF_{k_1} \boxtimes \cF_{k_2} \boxtimes \dots \,, \label{sSdd2}
\end{equation}
where $\boxtimes$ denotes the exterior tensor product over the
coordinate rings of $\C^{2k_i}$. The class of \eqref{sSdd2}
in K-theory may be computed
by the formula \eqref{fsSd}.

Via the BKRH identification \eqref{BKRH}, this operation may be
transported to the Hilbert schemes of points.

\subsubsection{}
If the sheaf $\cF$ is a representation of a further group $G$ that
commutes with $S(k)\times \bT$, then so is $\sSd  \cF$ and one
should remember to apply the operations $\Psi_n$ to the elements of
$G$ in \eqref{fsSd}.

In particular, from the perspective of M-theory, the grading by the
Euler characteristic in the DT theory is the grading with respect to a
multiplicative group $\C^\times \owns q$. Therefore, we set 
\begin{equation}
\Psi_n  \, q = q^n \,.\label{Psiq}
\end{equation}

\subsection{Main result}

\subsubsection{}

\begin{thm}\label{t_main} 
  \begin{equation}
    \label{mainTh}
     \ev_* \tO_\vir = \sSd  \left(
     		-\frac{1}{\{y\}} \frac{q}{1-q/\kappa} p_1 - \frac{1}{\{y\}} \frac{q}{1-q/\kappa} \bar p_1 - \frac{q}{\{y\}\{xz\}} \frac{1-q \kappa}{1-q/\kappa} p_1 \bar p_1
     \right)		
  \end{equation}
\end{thm}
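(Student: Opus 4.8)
The plan is to combine $\bT$-equivariant virtual localization with the plethystic formalism of $\sSd$, reducing the vertex to three elementary contributions. Under the identifications $K_\bT(\Hilb(D_i))\cong\Lambda$ of the preceding subsections, the class $\ev_*\tO_\vir$ is a symmetric function in two alphabets (recorded by the $p$ and $\bar p$ variables) with coefficients in $K_\bT(\pt)[[q]]$, which is exactly the kind of object on which the operation \eqref{sSdd2} acts. The $\bT$-fixed loci of $\PT(X/D)$ are supported on $\bT$-invariant curves, and since $X=S\times\C$ every complete such curve is supported on $C_1$, $C_2$, or their union $C_1\cup C_2$, meeting at a $\bT$-fixed point. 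The essential structural input is that $\tO_\vir$ and the map $\ev$ are multiplicative for the splitting of a fixed pair into pieces lying over disjoint points of the divisors: this follows from the product formula \eqref{chi1y}, the polarization \eqref{cTvir_sum}, and the degeneration formula for relative $\PT$-theory, which together ensure that the square-root twist of \eqref{tOvir} factors across bubbled components. Summing over the resulting unordered collections is, by construction, the operation \eqref{sSdd2}; by the Lemma \eqref{fsSd} and the prescription \eqref{Psiq} this equals the plethystic exponential of the connected series, so that the Adams operations $\Psi_n$ automatically generate the non-reduced and multiply-intersecting contributions from the degree-one generators. It therefore suffices to identify the connected series with the three-term argument of $\sSd$ in \eqref{mainTh}.

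Next I would compute the two single-leg contributions. A connected configuration meeting $D_1$ once is supported on a single reduced copy of $C_1$, whose local model is $\mathbb{P}^1\times\C^2$ with normal bundle $\cO\oplus\cO(-1)$ as in the Example for the class $d\cO_{C_1}$, where the summand $\cO(-1)$ is acyclic and the stalk $\tO_\vir|_{\cE_d}=y^{-d^2/2}$ matched $(\cK\otimes\cO(1))^{1/2}$, serving as a low-degree check. Running the relative $\PT$ localization on this model and summing over the stable-pair structures along the leg produces the geometric series $q/(1-q/\kappa)$, while the curve-tangent weight, after the symmetrizing twist, assembles into the prefactor $-1/\{y\}$; the single transverse intersection with $D_1$ contributes the factor $p_1$. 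This is the first term of \eqref{mainTh}, and the analogous computation for $C_2$ yields the $\bar p_1$ term.

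Finally I would treat the connected contribution meeting both divisors once, supported on the nodal curve $C_1\cup C_2$ through the blow-up point, as in the Example for $d\cO_{C_1\cup C_2}$ where $\tO_\vir|_{\cE_\lambda}=\kappa^{-|\lambda|}\cO(1)$. Such a configuration records one point on each divisor, giving the factor $p_1\bar p_1$, and a local computation at the node assembles the transverse weights together with the determinant twist of \eqref{tOvir} into the prefactor $-\frac{q}{\{y\}\{xz\}}\frac{1-q\kappa}{1-q/\kappa}$. The geometric denominator $1-q/\kappa$ is again the sum over point insertions, while the finite numerator $1-q\kappa$ encodes the relation imposed by the node on insertions across the two branches.

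The main obstacle is the multiplicativity asserted in the first step: one must prove a degeneration formula for the \emph{symmetrized} virtual structure sheaf, namely that the square-root twist $\cK_\vir^{1/2}$ of \eqref{tOvir} glues compatibly across bubbled components and that $\ev$ commutes with the decomposition, with the correct bookkeeping of the automorphism groups $\Aut(\mathbf{k})$ over which $\sSd$ symmetrizes in \eqref{sSdd2}. Granting this, the remaining steps are finite equivariant calculations on explicit local models; among them the node computation of the third term is the most delicate, since the numerator $1-q\kappa$ must be extracted with care.
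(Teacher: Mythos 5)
Your proposal founders on its first and central step: the asserted multiplicativity of $\ev_*\tO_\vir$ over ``connected configurations lying over disjoint points of the divisors.'' Every $\bT$-fixed stable pair in $\PT(X/D)$ is supported on the coordinate axes, so \emph{all} fixed points evaluate to monomial-ideal points of $\Hilb(D_i)$ concentrated at a single point of each divisor, with arbitrary non-reduced thickenings (partition profiles $\lambda,\mu$ along the legs and embedded-point structure). There is therefore no decomposition of the fixed loci into pieces over disjoint points, and neither the degeneration formula nor \eqref{chi1y} gives you that the ``connected series'' is at most bilinear in $p_1,\bar p_1$ --- that collapse is precisely the remarkable content of \eqref{mainTh}, not an input to it. The paper uses factorization only on the open locus of disjoint supports, as in \eqref{Ffact}, and even there it yields only the support statement \eqref{suppcG} for the quotient class $\cG_\lambda=\cF_\varnothing^{-1}\cdot\cF_\lambda$, proved by induction on the number of points; it never yields the formula directly. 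Likewise, your second and third steps assert that localization on local models ``produces'' the series $q/(1-q/\kappa)$ and the node factor $\frac{1-q\kappa}{1-q/\kappa}$, but naive localization gives rational functions in $x,y,z$ separately, summed over infinitely many fixed points with nontrivial weights, and you supply no mechanism forcing the cancellations.

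The paper's actual route supplies exactly the missing mechanisms. First, a rigidity lemma: pairing $\ev_*\tO_\vir$ with K-theoretic stable envelopes at zero slope and polarization \eqref{T12H} lands in $\Z[\kappa^{\pm1}][[q]]$ (see \eqref{stab_ev}), because the summands of \eqref{cTvir_sum} become balanced in the limit \eqref{limxyz}; this is what legitimizes evaluating in the refined limit \eqref{limref}, where the 2-leg vertex is computed exactly via the $\Gamma_\pm$ vertex-operator formalism as skew Schur functions \eqref{reflimver}. Second, the identification \eqref{stab_schur} of $\Stab(\lambda)$ with plethystically rescaled Schur functions, proved via the Bezrukavnikov--Kaledin equivalence in characteristic $p$ (Verma modules $\mapsto W^\lambda\otimes\cO_L$) --- nothing in your outline plays this role of converting stable-envelope pairings into symmetric-function identities. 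Third, the numerator $1-q\kappa$ in the $p_1\bar p_1$ term does not come from ``a local computation at the node'': it enters through the gluing operator $\mathbf{G}=\sSd\bigl(\frac{1-q\kappa}{1-q/\kappa}\,p_1\bar p_1\bigr)$ of \cite{OkSmir}, by which the refined-limit pairing \eqref{2ls} must be divided before matching against \eqref{mainTh}. Your correctly flagged worry (gluing the square-root twist across bubbles) is real but secondary; the fatal gaps are the false disjoint-support factorization at fixed loci and the absence of any rigidity or stable-envelope input to pin down the coefficients.
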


We recall that it is very important to keep in mind that this
identification includes the action of $q$ as in \eqref{Psiq}.

\section{Proof of Theorem \ref{t_main}}

\subsection{The $1$-leg vertex}

\subsubsection{}

To set up the strategy of the proof, we prove a special case of the
formula first. It will also serve as an 
auxiliary statement in the proof of the full statement.

This special case concerns curves that do not meet the
divisor $D_2$. In other words, they are in the homology
class of multiples of the curve $C_1\subset S$. In terms of
the formula \eqref{mainTh} this means taking the constant
term in the $\bar p_k$'s, thus the claim to prove is 
   \begin{equation}
   \label{onelg}
    \left.  \ev_* \tO_\vir \right|_{\textup{1-leg}} = \sSd  \left(
     		-\frac{1}{\{y\}} \frac{q}{1-q/\kappa} p_1
     \right)	\,.	
  \end{equation}

\subsubsection{}\label{s_irr_curves}

We note that there are very few reduced irreducible complete
curves in
$X$. Indeed, they all must be of the form
$$
C \times \textup{point} \subset S \times \C
$$
where $C\subset S$ is reduced and irreducible, and thus
either a smooth fiber of the blow-down map
$$
S \to \mathbb{P}^1 \times \C
$$
or one of the irreducible components of the special fiber
$C_1 \cup C_2 \subset S$. The smooth fiber is a smoothing of
$C_1 \cup C_2$.

In particular, irreducible curves that do not meet $D_2$ are all of
the  form $C_1 \times \textup{point} \subset S \times \C$. 

\subsubsection{}\label{s_stable} 

Consider the one rank 1 torus that scales the $xz$- and
$y$-axes with opposite weights, attracting and repelling
respectively. We will pair $\ev_* \tO_\vir$ with the stable
envelopes $\Stab(\lambda)$ of fixed points for this torus action.
See Chapter 9 in \cite{OkLect} for an introduction to stable
envelopes in equivariant K-theory.

By the discussion
of Section \ref{s_irr_curves} the intersections of supports
of $\ev_* \tO_\vir$ and $\Stab(\lambda)$ is proper, thus
the pairing is a series in $q$ with coefficients in
nonlocalized K-theory 
$$
K_{\bTt}(\pt) = \Z[x^{\pm 1}, y^{\pm 1}, z^{\pm 1}, \kappa, y^{1/2}] \big/
(\kappa^2 - xyz) \,. 
$$
In fact, by a judicious choice of the slope and the polarization of
stable envelopes, we may achieve a sharper result.

\subsubsection{}

We take stable envelopes $\Stab(\lambda)$ with zero slope and the
polarization defined by by the polarization \eqref{T12H}.

\begin{lem}
  With this choice of parameters, we have
  \begin{equation}
\chi((xz)^{|\lambda|^2/2} \Stab(\lambda) \otimes \ev_* \tO_\vir) \in\Z[\kappa^{\pm 1}][[q]] \,.\label{stab_ev}
\end{equation}
\end{lem}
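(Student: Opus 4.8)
The goal is to establish the integrality statement \eqref{stab_ev}, namely that the pairing of the symmetrized virtual structure sheaf against the stable envelope, after the normalizing twist $(xz)^{|\lambda|^2/2}$, lies in $\Z[\kappa^{\pm 1}][[q]]$. The strategy is to control the three potential sources of denominators separately: the localization denominators coming from the fixed-point contributions, the $\{y\}$ and $\{xz\}$ factors visible in \eqref{mainTh}, and the variables $x,y,z$ themselves (as opposed to the combination $\kappa$). The plan is to first invoke the properness observation of Section \ref{s_stable}: because the supports of $\ev_* \tO_\vir$ and $\Stab(\lambda)$ meet properly for the chosen rank-$1$ torus, the pairing is already a priori a series in $q$ with coefficients in the \emph{nonlocalized} ring $K_{\bTt}(\pt)$. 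This removes the localization denominators at the outset, so the issue is purely to show that the resulting Laurent polynomial in $x,y,z$ (with $\kappa^2 = xyz$) in fact only involves $\kappa$.

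\textbf{Reduction to a weight/degree count.} First I would write out the pairing as a sum over $\bT$-fixed pairs $(\cF,s)$ in $\PT(X/D)$ whose image under $\ev$ lies in the fixed locus indexed by $\lambda$, weighting each by the corresponding stalk of $(xz)^{|\lambda|^2/2}\,\Stab(\lambda)\otimes\tO_\vir$. The choice of zero slope and the polarization \eqref{T12H} is exactly what is designed to make this manageable: the zero slope fixes the leading behavior in the K\"ahler/boxcounting parameter so no negative powers of $\kappa$ proliferate, while aligning the stable-envelope polarization with $\cT^{1/2}_{\Hilb(D)}$ means the polarization half of $\cT_\vir$ in \eqref{cTvir_sum} cancels cleanly against the polarization built into $\Stab(\lambda)$. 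Concretely, $\tO_\vir$ contributes $\det(\cT^{1/2}_\vir)^{-1}$ twisted by $\kappa^{\rk \cT^{1/2}_\vir}$ and by $(\sqrt{xz})^{-|\cF|_{D_1}|^2-|\cF|_{D_2}|^2}$, and the explicit computations in the two Examples above (giving $\tO_\vir|_{\cE_d} = y^{-d^2/2}$ and $\tO_\vir|_{\cE_\lambda}=\kappa^{-|\lambda|}\,\cO(1)|_{\cE_\lambda}$) are the prototypes showing that after the symmetrization and the $(xz)^{|\lambda|^2/2}$ twist the leftover $y$- and $xz$-weights reorganize into powers of $\kappa=\sqrt{xyz}$.

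\textbf{The core invariance argument.} The heart of the proof is to show that each coefficient, which a priori lives in $\Z[x^{\pm1},y^{\pm1},z^{\pm1},\kappa,y^{1/2}]/(\kappa^2-xyz)$, is actually invariant under the residual two-dimensional subtorus that fixes $\kappa$ — equivalently, that it is independent of $x/z$ and of $y/\kappa$. I would establish this by a dimension/Lagrangian argument: since $D$ is anticanonical, $\ev$ has image a virtual Lagrangian (noted after \eqref{THilb}), and the half-dimensionality forces the virtual normal weights to pair up so that the symmetrized sheaf $\tO_\vir$ has weights symmetric under $\mathsf{w}\mapsto xyz/\mathsf{w}$; pairing against a stable envelope with the matching polarization then produces a quantity whose dependence on the individual characters collapses to $\kappa$ and $q$ alone. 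The $\{y\}$ and $\{xz\}$ denominators in \eqref{mainTh} must be shown not to survive: the properness input guarantees the pairing is regular (no poles) in the nonlocalized ring, so any apparent pole at $y=1$ or $xz=1$ from the formula is spurious and cancels — I would verify this cancellation on the generating-function level using the $\sSd$-formula \eqref{fsSd}, since $\sSd$ of the bracketed expression, expanded via $\exp(\sum \frac1n \Psi_n(\,\cdot\,))$ with $\Psi_n q = q^n$, manifestly has no poles in the relevant loci once the three terms are combined.

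\textbf{Main obstacle.} The step I expect to be hardest is precisely ruling out the $\{y\}$ and $\{xz\}$ poles and confirming that only $\kappa$ (not $x,y,z$ separately) appears. The abstract properness argument guarantees regularity, but converting "regular in $K_{\bTt}(\pt)$" into the sharper "lies in $\Z[\kappa^{\pm1}]$" requires genuinely using the rigidity afforded by the opposite attracting/repelling weights on the $xz$- and $y$-axes together with the Lagrangian symmetry of $\tO_\vir$. In effect one must show that the two-torus acting with weight zero on $\kappa$ acts trivially on each coefficient, and the delicate point is that this rigidity holds uniformly in $q$ and survives the induction/symmetrization in $\sSd$; this is where the careful matching of slope and polarization, rather than any brute-force localization computation, does the decisive work.
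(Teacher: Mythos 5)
Your skeleton matches the paper's: properness of the support intersection (Section \ref{s_irr_curves}--\ref{s_stable}) puts the pairing a priori in the nonlocalized ring $K_{\bTt}(\pt)$, and a rigidity argument then collapses the dependence on $x,y,z$ to dependence on $\kappa$ alone. But the execution has two genuine problems. First, your proposed verification that the $\{y\}$- and $\{xz\}$-poles ``cancel,'' carried out by expanding the $\sSd$-expression of \eqref{mainTh} via \eqref{fsSd}, is circular: this lemma is a step in the proof of Theorem \ref{t_main} --- it is precisely what licenses computing the pairing in a single limit of the form \eqref{limxyz} --- so you cannot appeal to \eqref{mainTh} inside it. Moreover the worry is misplaced: once properness gives that the coefficients of the $q$-series are honest Laurent polynomials in $x,y,z,\kappa,y^{1/2}$, there are no poles left to cancel; the only remaining question is which monomials occur, i.e., whether each coefficient stays bounded in every limit $x^{\pm1},y^{\pm1},z^{\pm1}\to\infty$ with $xyz$ fixed, since a bounded Laurent polynomial under all such limits can depend only on $\kappa$.

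Second, the decisive boundedness step --- which you correctly flag as the hard part --- is left as an assertion, whereas the paper closes it by equivariant localization together with the decomposition \eqref{cTvir_sum}. Each fixed-point contribution is a rational function in which the factors coming from $\cT^{1/2}_\vir - xyz\left(\cT^{1/2}_\vir\right)^\vee$ become balanced after the twist by $\det\left(\cT^{1/2}_\vir\right)^{-1}$ in \eqref{tOvir}: every ratio of the form $\{\kappa^2/\mathsf{w}\}/\{\mathsf{w}\}$ tends to $-\kappa^{\pm1}$ in the limit \eqref{limxyz}, hence stays finite --- this is your ``weights pair up under $\mathsf{w}\mapsto xyz/\mathsf{w}$'' made quantitative. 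But that self-duality argument says nothing about the third summand $\cT^{1/2}_{\Hilb(D)}$ in \eqref{cTvir_sum}, which is \emph{not} balanced by itself; the paper handles it by the defining weight (Newton polytope) condition on stable envelopes at zero slope with polarization \eqref{T12H}, which bounds the weights of the restriction of $\Stab(\lambda)$ to fixed points exactly so as to offset the unbalanced polarization term (and, away from the starting fixed point $\lambda$, the strict inequalities kill those contributions in the limit). Saying the polarization half ``cancels cleanly against the polarization built into $\Stab(\lambda)$'' names the right ingredients, but without invoking the degree inequality from the definition of stable envelopes the argument does not close.
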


\begin{proof}
It suffices to check that $\chi(\Stab(\lambda) \otimes \ev_*
\tO_\vir)$ remains bounded as
\begin{equation}
x^{\pm 1}, y^{\pm 1}, z^{\pm 1} \to
\infty\,, \quad xyz = \textup{constant} \,. \label{limxyz}
\end{equation}
By equivariant
localization, the result is a sum of contribution of fixed points,
each of which is a rational function of $x,y,z,\kappa$.

In this rational functions, terms that correspond to the
first and the second summands in \eqref{cTvir_sum}, after the
twist by $\det \left(\cT^{1/2}_\vir\right)^{-1}$ in \eqref{tOvir}, 
become balanced, in the sense that they stay finite in the
limit \eqref{limxyz}.

The terms corresponding to last summand in \eqref{cTvir_sum}
become balanced after pairing with $\Stab(\lambda)$ by the
weight condition in the definition of stable envelopes.

\end{proof}

\subsubsection{}
We can compute \eqref{stab_ev} by taking any specific limit
of the form \eqref{limxyz}. Note from the above proof that in
this computation we won't need to consider the contribution of
fixed points in $\Hilb(D_1)$ other than the starting point
$\lambda$. Indeed, the weights in the
restriction of $\Stab(\lambda)$ to other
fixed points satisfy strict inequalities and hence their
contribution goes to $0$ in the limit \eqref{limxyz}.

\subsubsection{}
Of all possible limits in \eqref{limxyz}, we choose
\begin{equation}
x \gg 1 \gg y \gg z \,, \quad xyz = \textup{constant} \,. \label{limref}
\end{equation}
which, in the language of \cite{NO} means that make
computations in the \emph{refined} vertex limit, with
the $y$-direction preferred.

The details of this computation will be worked out in the
full $2$-leg generality below. Here we only state the
result

\begin{prop} We have 
  \begin{equation}
    \label{stab_one_leg}
    \chi((xz)^{|\lambda|^2/2} \Stab(\lambda) ,  \ev_* \tO_\vir) = s_\lambda\left(
    p_i = - \frac{q^i}{1-q^i/\kappa^i}
    \right) \,. 
  \end{equation}
\end{prop}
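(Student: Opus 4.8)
The plan is to evaluate the left-hand side by $\bT$-equivariant localization on $\PT(X/D_1)$ and to recognize the resulting fixed-point sum as the stated Schur function. By the previous lemma and the observation following it, in the limit \eqref{limref} the restriction of $\Stab(\lambda)$ to every fixed point of $\Hilb(D_1)$ other than $\lambda$ carries weights obeying strict inequalities and so drops out; it therefore suffices to enumerate the $\bT$-fixed stable pairs $P$ with $\ev(P)=\lambda$ and to sum their contributions, each weighted by the diagonal restriction $\Stab(\lambda)\big|_\lambda$.

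First I would classify these fixed pairs. By Section \ref{s_irr_curves} the only complete curves meeting $D_1$ and not $D_2$ are multiple covers of $C_1\times\{\mathrm{pt}\}$, whose normal bundle in $X$ is $\cO\oplus\cO(-1)$; a fixed pair is thus supported on such a curve, with a prescribed profile in the $(y,xz)$-plane and additional thickening data along the leg. Combinatorially each $P$ is recorded by the Young diagram $\lambda$ together with a tower of partitions, and the generating series $\sum_P q^{\chi(P)}$ factors over this tower. I expect these towers to be organized by the Lagrangian correspondences $E_d$ of Proposition \ref{p_ed}, whose tangent weights to \eqref{imL} reappear as the denominators below.

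Next I would restrict $\tO_\vir$ to each fixed pair and pass to the refined limit \eqref{limref}. As in the proof of the lemma, the twist $\det(\cT^{1/2}_\vir)^{-1}$ in \eqref{tOvir} balances the first two summands of \eqref{cTvir_sum}, the weight condition for $\Stab(\lambda)$ balances the third, and the prefactor $(xz)^{|\lambda|^2/2}$ absorbs the remaining divergence, so that the surviving contribution lands in $\Z[\kappa^{\pm1}][[q]]$. In this limit each layer of the tower contributes a single factor assembled from $q$, $\kappa$ and $1/\{y\}$, and the fixed-point sum over pairs with boundary $\lambda$ resums to $s_\lambda$ evaluated at $p_i=-q^i/(1-q^i/\kappa^i)$ --- the K-theoretic refined one-leg vertex in the present gauge. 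Under the plethystic sign $p_i\mapsto -p_i$ the same quantity equals $(-1)^{|\lambda|}$ times the principal specialization $s_{\lambda'}(q,\,q^2/\kappa,\,q^3/\kappa^2,\dots)$, a product-over-boxes form that offers an independent check; summing over all $\lambda$ by the Cauchy identity then recovers \eqref{onelg} through \eqref{fsSd}.

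The main obstacle is the resummation in the third step: one must show that the symmetrized virtual sheaf at each fixed pair, in the refined limit, assembles over the thickening tower into exactly the content-weighted product computing $s_\lambda$ at the prescribed power sums, with no residual dependence on $x,y,z$ beyond $\kappa$ and $q$. This is precisely where the chosen relative geometry is decisive, collapsing what is generically a Macdonald-type edge factor into a single Schur function. I would carry out the full bookkeeping inside the $2$-leg computation that follows, of which \eqref{stab_one_leg} is the specialization obtained by discarding the $\bar p_k$; the present one-leg case is the place to pin down the weight conventions and to confirm the sign and the role of the $\{y\}$-denominator.
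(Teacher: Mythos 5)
Your proposal takes essentially the same approach as the paper: the rigidity lemma reduces the pairing to a computation in the refined limit \eqref{limref}, where only the diagonal restriction $\Stab(\lambda)\big|_\lambda$ survives, and the actual fixed-point resummation is deferred to the full $2$-leg refined vertex computation of Section \ref{s_refined} --- exactly as the paper does (``the details of this computation will be worked out in the full $2$-leg generality below''), with your sign/transpose consistency check $s_\lambda(-p_i)=(-1)^{|\lambda|}s_{\lambda'}(p_i)$ agreeing with \eqref{reflimver}. The one cosmetic difference is that the paper's resummation proceeds via the operators $\Gamma_\pm$ and the per-box $-\kappa^{\pm1}$ contributions rather than the correspondences $E_d$ (which enter only in the factorization argument of Section \ref{s_factorizable}), but since you explicitly defer that bookkeeping to the $2$-leg computation, this hedged aside is harmless.
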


\subsubsection{}

To finish the proof in the 1-leg case we need the following
statements.

\begin{prop}\label{p_stab_schur}
Upon identification with symmetric function we have
  \begin{equation}
    \label{stab_schur}
    \Stab(\lambda) = \,\, \frac{y^{n/2}}{(xz)^{\frac{n^2}{2}-n}}  \cdot s_\lambda\left(
    \frac{p_i}{1-(xz)^i}
    \right)
  \end{equation}
\end{prop}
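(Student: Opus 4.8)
The goal is to compute the K-theory class of the stable envelope $\Stab(\lambda)$ for the fixed point $\lambda \in \Hilb(D_1)$, under the identification of $K_{\bT}(\Hilb(\C^2,\textup{points}))$ with symmetric functions supplied by Bridgland--King--Reid--Haiman (\ref{BKRH}). The torus in play is the rank-one subtorus of Section \ref{s_stable} that scales the $xz$-axis with attracting weight and the $y$-axis with repelling weight. The plan is to use the defining support, degree, and normalization axioms of zero-slope stable envelopes to pin down the class uniquely, and then to identify that unique solution with the plethystically-substituted Schur function on the right-hand side of (\ref{stab_schur}).

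First I would recall that with respect to this one-dimensional torus, the relevant chamber makes $\Hilb(\C^2)$ into an attracting/repelling geometry whose fixed points are indexed by diagrams $\lambda$. The stable envelope is characterized by three properties: it is supported on the attracting set of $\lambda$; its restriction to $\lambda$ itself is the K-theoretic Euler class of the repelling part of the tangent space (determined by the chosen polarization, here the half $\cT^{1/2}_{\Hilb(D)}$ from (\ref{T12H})); and its restrictions to other fixed points $\mu$ satisfy the window/weight inequalities dictated by the zero slope. Next I would translate these geometric conditions into statements about symmetric functions. Under the Haiman correspondence, the natural fixed-point basis corresponds to Macdonald-type polynomials, and the attracting-set support condition becomes a triangularity condition with respect to the dominance order on partitions. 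The normalization at $\lambda$ fixes the leading diagonal term, and the off-diagonal window conditions fix the subleading terms. Because a triangular system with prescribed diagonal and prescribed window bounds has at most one solution, this determines $\Stab(\lambda)$ uniquely.

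I would then verify that the proposed expression
\[
\frac{y^{n/2}}{(xz)^{\frac{n^2}{2}-n}} \cdot s_\lambda\!\left(\frac{p_i}{1-(xz)^i}\right)
\]
satisfies all three axioms. The plethystic substitution $p_i \mapsto p_i/(1-(xz)^i)$ is exactly the $\bSd$-type operation appearing throughout Section 2.3, so the Schur function $s_\lambda$ evaluated on these modified power sums expands, via the Pieri/Jacobi--Trudi rules, into a dominance-triangular combination of the fixed-point classes; this handles the support condition. Evaluating the leading coefficient and comparing with the prefactor $y^{n/2}(xz)^{n-n^2/2}$ should reproduce precisely the Euler class of the repelling tangent directions in the polarization (\ref{T12H}), confirming the diagonal normalization. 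The degree and weight bookkeeping of the power of $(xz)$ in the denominator $1-(xz)^i$ is what enforces the zero-slope window inequalities on the off-diagonal restrictions.

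The main obstacle I expect is the off-diagonal window condition: checking that the plethystic Schur function, once expanded in the fixed-point basis, has its Newton polygon of $(xz)$-weights strictly inside the window prescribed by the zero slope and the polarization $\cT^{1/2}_{\Hilb(D)}$. Verifying the diagonal normalization is essentially a direct Euler-class computation, and the support triangularity is combinatorially standard, but the strict weight bounds require a careful estimate of the exponents of $(xz)$ appearing in $s_\lambda(p_i/(1-(xz)^i))$ against the normal weights at each $\mu \ne \lambda$. I would control this either by an explicit degree estimate on the expansion, or by invoking the rigidity/uniqueness of stable envelopes: since both sides satisfy support and normalization, and since the window is half-open with the correct width, the difference lies in the attracting set of strictly smaller fixed points and vanishes by induction on dominance order. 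This rigidity argument is the cleanest route and is likely what replaces a brute-force verification of every inequality.
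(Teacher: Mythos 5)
Your strategy---characterize $\Stab(\lambda)$ by the three axioms (support on the attracting set, diagonal normalization by the polarization, and the slope-zero window condition on off-diagonal restrictions) and then verify that the plethystic Schur function satisfies them---is genuinely different from the paper's proof, which never checks the axioms for the candidate at all. The paper instead identifies $\Stab(\lambda)$ with the image of the Verma module over the rational Cherednik algebra $\cA_0$ at parameter zero in characteristic $p\gg 0$ under the Bezrukavnikov--Kaledin equivalence \eqref{BK2}, invoking \cite{BezrukavnikovOkounkov} for the fact that this zero-slope quantization equivalence sends Verma modules to suitably normalized stable envelopes; the K-theory class of the Verma module is then computed directly (it is $p^n$ copies of $W^\lambda\otimes \Fr^*\cO_L$ over $\cA_{0,p}$), so that \eqref{e_plet} immediately yields $s_\lambda$ with the plethystic substitution, up to a monomial prefactor fixed by restricting to the fixed point $\lambda$. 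Your route, if completed, would be more elementary and self-contained (it avoids the characteristic-$p$ machinery and the reference to a result in preparation); the paper's route buys exactly what you could not supply: it bypasses the window verification entirely.

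That verification is where your proposal has a genuine gap, and your proposed fallback does not close it. Uniqueness of stable envelopes requires \emph{all three} axioms; support plus diagonal normalization alone determine the class only modulo K-classes supported on attracting sets of smaller fixed points, which is precisely the ambiguity the window condition exists to kill. So the argument ``the difference lies in the attracting set of strictly smaller fixed points and vanishes by induction on dominance order'' is circular: the inductive step has no content unless you already control the $(xz)$-degrees of the off-diagonal restrictions of $s_\lambda\bigl(p_i/(1-(xz)^i)\bigr)$, which is the estimate you explicitly set aside. A further point needing care: for $\Hilb(\C^2,n)$ the slope $0$ lies \emph{on} a wall (walls occur at rationals with denominator at most $n$), so the window inequalities at this slope are not strict and the axiomatic characterization requires specifying a side of the wall; without that specification uniqueness can genuinely fail, and your rigidity argument would be comparing the candidate to an object that is not pinned down. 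As written, then, the proof is incomplete: you must either carry out the explicit exponent estimates on the expansion of the plethystic Schur function in the fixed-point (Macdonald) basis against the tangent weights at each $\mu\neq\lambda$, specifying the side of the slope-zero wall, or replace this step by an independent input such as the Cherednik-algebra/Bezrukavnikov--Kaledin identification used in the paper.
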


\begin{prop}\label{p_agree1}
  Formulas \eqref{stab_one_leg} and \eqref{onelg} agree. 
\end{prop}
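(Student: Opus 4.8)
The plan is to prove the agreement by substituting the closed form \eqref{onelg} into the left-hand side of the pairing \eqref{stab_one_leg} and checking that its right-hand side, $s_\lambda\!\left(p_i = -q^i/(1-q^i/\kappa^i)\right)$, comes out. Since the stable envelopes $\{\Stab(\lambda)\}$ form a basis of the (localized) equivariant $K$-theory of $\Hilb(D_1)$ and the Euler pairing $\chi(-,-)$ is nondegenerate, the collection of numbers $\chi\big((xz)^{|\lambda|^2/2}\Stab(\lambda),\,-\big)$ pins down a class uniquely; reproducing \eqref{stab_one_leg} from $\sSd\!\big(-\tfrac1{\{y\}}\tfrac{q}{1-q/\kappa}p_1\big)$ is therefore equivalent to the equality of the two formulas. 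Everything is carried out on the symmetric-function side via the identifications of Section 2.2, and the grading by number of points forces only the degree-$|\lambda|$ part of $\sSd f$ to contribute.

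First I would evaluate $\sSd f$ for $f = -\tfrac1{\{y\}}\tfrac{q}{1-q/\kappa}p_1$ using \eqref{fsSd}. The Adams operation acts by $\Psi_n y = y^n$, $\Psi_n \kappa = \kappa^n$, $\Psi_n q = q^n$ (the last by \eqref{Psiq}) and $p_1 \mapsto p_n$, giving
\[ \sSd f = \exp\Big(\sum_{n\ge1}\tfrac{b_n}{n}\,p_n\Big), \qquad b_n = -\frac{1}{\{y^n\}}\,\frac{q^n}{1-q^n/\kappa^n}. \]
On the other side, Proposition \ref{p_stab_schur} writes $\Stab(\lambda)$ as a plethystically modified Schur function, and tracking the prefactor with $n = |\lambda|$ gives $(xz)^{|\lambda|^2/2}\Stab(\lambda) = y^{n/2}(xz)^{n}\,s_\lambda\!\left(\tfrac{p_i}{1-(xz)^i}\right)$.

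The computation then rests on three facts. The geometric Euler pairing, written in the power-sum basis, is diagonal, $\langle p_\mu, p_\nu\rangle_\chi = \delta_{\mu\nu}\,z_\mu\prod_i w_{\mu_i}$, with a weight $w_n$ assembled from the two tangent directions $y$ and $xz$ of $\Hilb(\C^2)$. Any operator that rescales the individual $p_n$ --- in particular the plethystic twist $p_n\mapsto \tfrac{p_n}{1-(xz)^n}$ carried by $\Stab$ --- is self-adjoint for this diagonal form. And one has the reproducing identity $\langle s_\lambda,\,\exp(\sum_n \tfrac{a_n}{n}p_n)\rangle_\chi = s_\lambda\!\left(p_n = a_n w_n\right)$. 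Moving the twist onto $\sSd f$ and applying these gives $y^{n/2}(xz)^{n}\,s_\lambda\!\left(p_n = \tfrac{b_n w_n}{1-(xz)^n}\right)$; checking that $\tfrac{b_n w_n}{1-(xz)^n} = \big(y^{-1/2}(xz)^{-1}\big)^n\big(-\tfrac{q^n}{1-q^n/\kappa^n}\big)$ and invoking the homogeneity $s_\lambda(p_n = r^n d_n) = r^{|\lambda|}s_\lambda(p_n = d_n)$ to absorb the scalar $y^{n/2}(xz)^n$ leaves exactly $s_\lambda\!\left(p_i = -q^i/(1-q^i/\kappa^i)\right)$, as desired.

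The main obstacle is the precise determination of the pairing weight $w_n$, including its overall sign and the orientation of the weights $y, xz$: one expects $w_n = -(1-y^{-n})(1-(xz)^{-n})$, the sign-twisted Nakajima-type norm of the two coordinate directions, and it is exactly this shape that cancels the factors $\{y^{\mu_i}\}$ and $(1-(xz)^{\mu_i})$ supplied by $f$ and $\Stab$. Making these conventions consistent with the symmetrized sheaf $\tO_\vir$, the polarization \eqref{T12H}, and the normalization $(xz)^{|\lambda|^2/2}$ is where the real care lies; once that and the $q$-adic completions are in place, the self-adjointness, reproducing kernel, and Schur homogeneity steps are formal.
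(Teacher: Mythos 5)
Your proposal is correct and follows essentially the same route as the paper: the paper likewise rewrites the geometric Euler pairing as a Hall pairing with the diagonal weight $(1-(xz)^n)(1-y^n)$ distributed plethystically onto the two arguments (one factor cancelling the twist carried by $\Stab(\lambda)$ from Proposition \ref{p_stab_schur}, the other the $\{y^n\}$ in $\sSd f$), then invokes the reproducing property of the Hall pairing together with Schur homogeneity to absorb the monomial prefactor. The pairing weight $w_n$ that you flag as the main point of care is exactly what the paper supplies in its displayed identity $\chi(f,g)=\bigl\langle\overline{f\bigl(p_i(1-(xz)^{-i})\bigr)},\, g\bigl(p_i(1-y^i)\bigr)\bigr\rangle_{\text{Hall}}$, matching your ansatz up to the monomial and sign conventions you already anticipated.
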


\subsubsection{}

\begin{proof}[Proof of Proposition \ref{p_stab_schur}]
  We follow the notations of Section \ref{s_ab}. 
  Let $\Bbbk$ be a field and consider the algebra 
  $$
  \cA_0 = \Bbbk[S(n)] \ltimes \Bbbk
  \langle a_1,\dots,a_n, b_1,\dots, b_n \rangle \big/ ([a_i,b_j] =
  \delta_{ij}) \,. 
  $$
  This is the rational Cherednik algebra with parameter $0$ and thus
  the simplest quantization of the orbifold $T^*\Bbbk^{n}/
  S(n)$. 

Assume that $p=\textup{char} \, 
\Bbbk \gg 0$, which, in particular, implies that
irreducible representations of the symmetric group $S(n)$ are indexed
by all partitions $\lambda$ of $n$. Given such representation
$W^\lambda$, we consider the corresponding Verma module over $\cA_0$
\begin{equation}
\textup{Verma}(\lambda) = \cA_0 \otimes_{\Bbbk[S(n)] \ltimes 
\Bbbk[b_1,\dots,b_n]} W^\lambda \,,  \label{Vermal}
\end{equation}
where $b_i$ act on $W^\lambda$ by zero. 

The algebra $\cA_0$ is a free module of rank $p^{2n}$ over its
subalgebra
$$
\cA_{0,p} = \Bbbk[S(n)] \ltimes \Bbbk
  [a^p_1,\dots,a^p_n, b^p_1,\dots, b^p_n]  \subset \cA_0 
$$
and as a $\cA_{0,p}$-module, the Verma module \eqref{Vermal} is
$p^{n}$ 
copies of $W^\lambda \otimes \Fr^* \cO_L$, where $L$ is the 
Lagrangian subvariety 
$$
L = \{b_1=\dots=b_n=0\}
$$
and $\Fr$ is the Frobenius map. Thus the Bezrukavnikov-Kaledin
equivalence \cite{BezrukavnikovKaledin} 
\begin{equation}
D^b \cA_0-\textup{mod}  \xrightarrow{\,\, \sim \,\,}
D^b \Coh_{S(n)} \left(\C^{2n}\right)^{(1)}\label{BK2}
\end{equation}
sends Verma modules to modules of the form $W_\lambda \otimes \cO_L$,
the K-theory class of which was discussed earlier in \eqref{e_plet}.
The twist by 1 in the $\left(\C^{2n}\right)^{(1)}$ term in \eqref{BK2}
denotes the Frobenius twist.

The spherical subalgebra in $\cA_{0}$ is a quantization of
$\Hilb(\C^2,n)$ for zero value of the quantization parameter, thus the
Bridgland-King-Reid-Haiman equivalence \eqref{BKRH} is also an example
of a Bezrukavnikov-Kaledin equivalence at zero slope. It is known
\cite{BezrukavnikovOkounkov} that this equivalence sends Verma modules to suitably
normalized stable envelopes, in particular
\begin{equation}
    \label{stab_schur2}
    \Stab(\lambda) = \textup{monomial weight} \cdot \,   s_\lambda\left(
    \frac{p_i}{1-(xz)^{-i}} 
    \right) \,. 
  \end{equation}
 The prefactor may be found from comparing the leading monomials, that
 is, the restrictions to the fixed point indexed by $\lambda$. This
 concludes
 the proof.
\end{proof}

\bigskip


\subsubsection{}

\begin{proof}[Proof of Proposition \ref{p_agree1}]
  
\begin{multline}
\chi\left(y^{n/2} (xz)^n \cdot s_\lambda\left(
    \frac{p_i}{1-(xz)^i}
    \right),
 \sSd\left(
    -\frac{p_1}{\{y\}} \frac{q}{1-q/\kappa}
    \right) \right) = \\
    = \left\langle
    s_\lambda(p_i), \sSd\left(
    p_1 \frac{q}{1-q/\kappa}
    \right)
    \right \rangle_\text{Hall scalar product} = 
    s_\lambda\left|_
   { p_i = -\frac{q^i}{1-q^i/\kappa^i}}
   \right.
\end{multline}
Here we used that

\begin{multline}
\chi(f(p_i), g(p_i)) = \left. e^{
\sum_n n(1-(xz)^n (1-y^n) \frac{\p}{\p p_n} \frac{\p}{\p \bar p_n} 
}
\bar f (p_i) g(p_i) \right|_{p_i = \bar p_i = 0} = \\
 = \left. e^{
\sum_n n \frac{\p}{\p p_n} \frac{\p}{\p \bar p_n} 
}
\bar f (p_i \cdot (1-(xz)^i)) \  g(p_i\cdot(1-y^i)) \right|_{p_i = \bar p_i = 0} = \\
=
 \left \langle
\overline{f(p_i (1-1/(xz)^i))}, g(p_i (1-y^i))
\right \rangle_\text{Hall Scalar Product}.
\\
\end{multline}

\end{proof}

\noindent 
This concludes the proof of the $1$-leg case.

\subsection{Refined vertex limit}
\label{s_refined}

Here we study the limit
$$
t_1 \gg 1 \gg t_3 \gg t_2, t_1 t_2 t_3 = \kappa^{2} = \text{const}
$$
 of the equivariant PT vertex with 2 nontrivial legs along (nonpreferred) directions $t_1, t_2$. In this section we use the convention that the tangent weights at the origin are $t_i^{-1}$.
 The tangent space at a fix point $T$ decomposes as
 $$
 T = T_{>0} - T_{<0}, \  \text{and} \  T_{<0} =  \kappa^{-2} \overline{T_{<0}}.
 $$
Note that
$$
\frac{\{k^{-2}/w\}}{\{w\}} = \frac{\frac{1}{\kappa\sqrt{w}} - {\kappa\sqrt{w}}}{\sqrt{w} - \frac{1}{\sqrt{w}}} \to
\begin{cases} -\kappa^{-1}, & \mbox{if } w \to 0 \\ -\kappa, & \mbox{if } w \to \infty \end{cases},
$$
so the contribution of a fixed point is equal to
$$
(-\kappa)^{-\rk T_{>0}}
$$
\begin{prop}
Refined limit of the PT vertex with 2 legs:
\begin{equation}
(P_{\lambda \mu})_\text{refined} = \textup{prefactor} \cdot \sum_\eta  s_{\lambda/\eta} (1, q/\kappa, (q/\kappa)^2, ...)\cdot s_{\mu/\eta} (q\kappa, (q\kappa)^2, ...), \label{reflimver}
\end{equation}
\end{prop}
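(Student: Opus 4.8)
The plan is to exploit the drastic simplification already recorded just above the statement: in the refined limit every torus-fixed point contributes the pure monomial $(-\kappa)^{-\rk T_{>0}}$, so that after weighting by $q^{\chi}$ the refined vertex degenerates into a purely combinatorial generating function over torus-fixed configurations asymptotic to $\lambda$ along $t_1$ and $\mu$ along $t_2$ and empty along the preferred axis $t_3$. The whole problem is thereby reduced to two tasks: enumerating these configurations, and summing $q^{\chi}\,(-\kappa)^{-\rk T_{>0}}$ over them. Since the weight is now a monomial, this is exactly the setting in which the transfer-matrix method of Okounkov–Reshetikhin–Vafa applies.

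First I would slice each configuration by the diagonal family of hyperplanes perpendicular to the preferred direction $t_3$. Each slice is a $2$-dimensional Young diagram, and the asymptotic conditions force the resulting sequence of diagrams to interlace in the usual sense, $\dots \prec \pi^{(-1)}\prec\pi^{(0)}\succ\pi^{(1)}\succ\dots$, stabilizing to the profile determined by $\lambda$ on the $t_1$-side and by $\mu$ on the $t_2$-side (for stable pairs one uses the labelled-box model of the fixed loci in place of plane partitions; the two infinite legs together with their common edge contribute a divergent but configuration-independent factor, which I pull out as the prefactor). I would then package the generating function over such interlacing sequences as a matrix element of a product of half-vertex (free-fermion) operators $\Gamma_\pm$ on the fermionic Fock space. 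Splitting the product at the preferred origin and inserting a complete set of states $\sum_\eta |\eta\rangle\langle\eta|$, the operators on the $t_1$-side act on $|\lambda\rangle$ and those on the $t_2$-side on $|\mu\rangle$; by the standard identity $\Gamma_-(\mathbf{x})|\eta\rangle=\sum_\nu s_{\nu/\eta}(\mathbf{x})\,|\nu\rangle$ the two halves evaluate to skew Schur functions, yielding $\sum_\eta s_{\lambda/\eta}(\mathbf{a})\,s_{\mu/\eta}(\mathbf{b})$. The variables $\mathbf{a},\mathbf{b}$ are the geometric progressions read off from the $q$- and $\kappa$-weights accumulated along the two sides of the origin; because the refined limit breaks the symmetry between the attracting $t_1$-direction and the repelling $t_2$-direction, one is led to $\mathbf{a}=(1,q/\kappa,(q/\kappa)^2,\dots)$ and $\mathbf{b}=(q\kappa,(q\kappa)^2,\dots)$, with the $k=0$ term present on the $\lambda$-leg but absent on the $\mu$-leg.

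The main obstacle will be the bookkeeping in this second step: one must verify that the $q$-exponent (the Euler characteristic, including the shift coming from the stable-pairs normalization of $\tO_\vir$) together with the $\kappa$-exponent $-\rk T_{>0}$ assembles, \emph{slice by slice}, into precisely these two asymmetric specializations, and in particular that the correct power of $\kappa$ and the correct sign land on each leg. I would pin this down by two independent checks. First, the $1$-leg specialization $\mu=\emptyset$ collapses the sum to $\eta=\emptyset$ and must reproduce the already-established Proposition \ref{stab_one_leg}, which fixes the normalization on the $\lambda$-leg. Second, the geometric symmetry $\lambda\leftrightarrow\mu$, $t_1\leftrightarrow t_2$ of the underlying threefold then determines the corresponding data on the $\mu$-leg and forces the asymmetry $q/\kappa$ versus $q\kappa$ between the two progressions. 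With the $\eta$-dependent part identified in this way, the prefactor is the remaining $\eta$-independent piece, which I would evaluate directly as the contribution of the two bare legs and their intersection; this is tedious but routine once the slicing weights are established.
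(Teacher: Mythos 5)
Your overall architecture coincides with the paper's: reduce the refined limit to monomial fixed-point contributions, package the sum as a product of $\Gamma_\pm$ operators sliced along the diagonal, insert $\sum_\eta|\eta\rangle\langle\eta|$ to produce $\sum_\eta s_{\lambda/\eta}\,s_{\mu/\eta}$, and isolate a configuration-independent prefactor coming from the two bare legs and their intersection (the paper evaluates the latter via \cite{NO}, obtaining \eqref{contr2}). However, the step you set aside as ``bookkeeping,'' to be pinned down by two consistency checks, is precisely the mathematical content of the paper's proof, and your substitute does not close it. To justify the matrix-element representation $\langle\lambda|\cdots\Gamma_-(q/\kappa)\Gamma_-(1)\Gamma_+(q\kappa)\cdots|\mu\rangle$ you need two facts: (i) that after subtracting the normalization of the two legs, the vertex character contains no weights tending to a finite nonzero value in the limit \eqref{limref} --- equivalently no weights of the form $(t_1t_2)^n$ --- so that each fixed component really contributes the pure monomial $(-\kappa)^{-\rk T_{>0}}$; and (ii) that $\rk T_{>0}$ is a \emph{sum of local per-box contributions} depending only on $\operatorname{sign}(i-j)$, which is exactly \eqref{contr1}. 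The paper proves both simultaneously by slicing $\pi_*\cF=\bigoplus_i \cE_i t_3^i$, isolating the torsion quotients $\cV_i$, expressing $T_3(\cF)-T_3(\cF')$ through the bilinear terms $U_{ij}$, and identifying $U_{ij}$ with the Carlsson--Okounkov Ext-operator, whose known weight structure excludes $(t_1t_2)^n$-weights and produces the per-box rule. Your proposed checks (the $\mu=\varnothing$ specialization against \eqref{stab_one_leg}, plus $t_1\leftrightarrow t_2$ symmetry) only determine the answer \emph{after} locality and multiplicativity over boxes are known; a priori the $\kappa$-exponent of a fixed component could depend non-locally on the configuration, e.g.\ through cross terms between the two legs encoded in $U_{ij}$ with $i\neq j$, and nothing in your argument rules this out. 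Note also that in PT theory the fixed loci are generally not isolated points, so even the monomial form of the contribution per fixed component rests on the balancedness statement (i), not merely on the general remark preceding the proposition.

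A secondary slip in the geometry of your transfer matrix: the interlacing sequence of partitions stabilizing to $\lambda$ and $\mu$ arises from slicing by the diagonal hyperplanes $i-j=\mathrm{const}$ in the $(t_1,t_2)$-plane, which \emph{contain} the preferred $t_3$-axis, not from hyperplanes perpendicular to $t_3$; the perpendicular slices are the $\cE_i$ used in the character computation, a different decomposition serving a different purpose. This matters because the per-box weight \eqref{contr1} depends on $i-j$, which is constant exactly on the diagonal slices, and it is this match that legitimizes assigning the argument $q^k/\kappa^k$ or $q^k\kappa^k$ to the $\Gamma$-operator of a given slice. Your treatment of the prefactor is consistent with the paper and is unproblematic, since the statement leaves it unspecified; but as written, the core of the proposal assumes, rather than proves, the slice-by-slice weight assembly, so the argument has a genuine gap at its central step.
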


\begin{proof}

The tangent space to a sheaf
$$
T_3 (\cF) = \chi(\cF) + \chi(\cF, \cO) - \chi(\cF, \cF).
$$
$$
\pi_* \cF = \bigoplus_{i=0}^d \cE_i t_3^i.
$$

Let us denote by $\cF'\subset \cF$ the minimal sheaf with the same outgoing cylinders as $\cF$ on $\C^3$, and by $\cE'_i \subset \cE_i$ the same for $\cE_i$ on $\C^2$, and the quotients  by $\cV_i$, so that we have exact sequences of sheaves on $\C^2$:
$$
0 \to \cE_i \to \cE'_i \to \cV_i \to 0, \ \ \ \ 
0 \to w_i \cO \to \cO \to \cE_i' \to 0.
$$

Writing contributions of $t_3$-slices, we get
\begin{multline}
T_3(\cF) - T_3(\cF') = \sum_i t_3^i \chi(\cV_i) + \sum_i t_3^{-i-1} \chi(\cV_i, \cO) - \\
- (1-t_3^{-1}) \sum_{i,j} t_3^{j-i}( \chi(\cV_i, \cV_j) + \chi(\cE_i', \cV_j) + \chi(\cV_i, \cE_j') ) = \\
= \sum_i t_3^i \chi(\cV_i) + \sum_i t_3^{-i-1} \chi(\cV_i, \cO) - \\
- (1-t_3^{-1}) \sum_{i,j} t_3^{j-i}( \chi(\cV_i, \cV_j) + \chi( \cV_j)- w_i^{-1} \chi(\cV_j) + \chi(\cV_i, \cO) - w_j \chi(\cV_i, \cO) ) = \\
= \sum_i t_3^{i-N} \chi(\cV_i) + \sum_i t_3^{N-i-1} \chi(\cV_i, \cO) 
+ (1-t_3^{-1}) \sum_{i,j} t_3^{j-i} U_{ij},
\end{multline}
where
$$
U_{ij} = - \chi(\cV_i, \cV_j) + w_i^{-1} \chi(\cV_j)  + w_j \chi(\cV_i, \cO) . 
$$
If we denote
$$
\text{character }\cV_i = w_i \overline{K_i},
$$
where $K_i$ are some torsion sheaves (which we identify with their characters) on the Hilbert scheme with tangent weights $t_1, t_2$ (dual Hilbert scheme),
we get the following expression 
$$
U_{ij} = \frac{w_i^{-1} w_j}{t_1 t_2} \left(
K_i + \overline{K_j} t_1 t_2 - K_i \overline{K_j} (1-t_1)(1-t_2)
\right).
$$
If we denote ideals
$$
I_i = \Ker(\cO \to K_i),
$$
then we can write the operator as
$$
U_{ij} = \frac{1}{t_1 t_2}\left(
\chi(w_i \cO, w_j \cO) - \chi(w_i I_i, w_j I_j).
\right)
$$
which is identical to the Carlson-Okounkov Ext-operator,
so it does not have weights $(t_1 t_2)^n$, and that's why multiplied by $1-t_3^{-1}$ get the same number of positive and negative weights.

\begin{center}
  \includegraphics{pt2d.20}
\end{center}

So, the index limit of $T_3(\cF)$ is a sum of contributions of each box, and the contribution of box $t_1^i t_2^j t_3^k$ depends on the sign of $i-j$:
\begin{equation}\label{contr1}
\text{contribution}(t_1^i t_2^j t_3^k) = \begin{cases}
-\kappa^{-1}, & i-j < 0 \\
-\kappa, & i-j \geq 0
\end{cases}.
\end{equation}
Let us define operators on the Fock space:
$$
\Gamma_+(z)  = \exp\left( \sum_{n \geq 1} \frac{z^n}{n} p_n
\right), 
\ \ \ 
\Gamma_-(z)  = \exp\left( \sum_{n \geq 1} z^n \frac{\p}{\p p_n}
\right).
$$
The refined limit of the 2-leg vertex then can be represented as
$$
P_{\lambda \mu} = \langle \lambda|
...
\Gamma_-(q^2/\kappa^2) \Gamma_-(q/\kappa)
\Gamma_-(1) \Gamma_+(q \kappa) \Gamma_+(q^2 \kappa^2) ...
|\mu\rangle,
$$
and using the properties
$$
\prod_i \Gamma_-(x_i) |\lambda\rangle = \sum_{\mu \subset \lambda}
s_{\lambda/\mu}(x_i) |\mu\rangle,
$$
$$
\prod_i \Gamma_+(x_i) |\lambda\rangle = \sum_{\mu \supset \lambda}
s_{\mu/\lambda}(x_i) |\mu\rangle,
$$
we get the statement.

Then we have obtained the formula up to a prefactor.
To figure out the prefactor, note that in the definition of vertex we have to consider
$$
T_3(\cF) - T_3(C_1) - T_3(C_2),
$$
rather then
$$
T_3(\cF) - T_3(\cF'),
$$
where $C_i$ are the 2 cylinders such that
$$
\text{support}(\cF') = C_1 \cup C_2.
$$
Their difference is
$$
T_3(\cF') - T_3(C_1) - T_3(C_2).
$$
Though there is a very easy combinatorial formula for this expression, we need only its limit, and it has already been investigated in the paper \cite{NO}. Then the prefactor is equal to a product over boxes in $C_1 \cap C_2$, and
\begin{equation}\label{contr2}
\text{contribution}(t_1^i t_2^j t_3^k) = \begin{cases}
-\kappa, & i-j < 0 \\
-\kappa^{-1}, & i-j \geq 0
\end{cases}.
\end{equation}
This has a clear interpretation: the vertex starts when we have "holes" in the union of 2 cylinders (in the intersection boxes have multiplicities 1 instead of 2), and that's why the contributions in
(\ref{contr1}) and (\ref{contr2}) are the opposite.
\end{proof}

\subsection{Factorizable sheaves}
\label{s_factorizable}

\subsubsection{}

We now consider the full two legs geometry with two evaluation
maps $\ev_i$ to the Hilbert scheme of points of two divisors
$D_1$ and $D_2$. We may view the two leg vertex as an operator
acting from $K_\bT(\Hilb(D_2))$ to $K_\bT(\Hilb(D_1))$. In
particular, consider 
  \begin{equation}
    \label{defcFl}
    \cF_\lambda = \ev_{1,*} \left(\tO_\vir \otimes \ev_2^*
       \Stab(\lambda)\right) \,. 
  \end{equation}
 The already
  established case of a $1$-leg vertex gives
  \begin{equation} 
  \cF_\varnothing = \sSd  \left(
     		-\frac{1}{\{y\}} \frac{q}{1-q/\kappa} p_1 
     \right)	\,.	\label{Fnot} 
   \end{equation}
   Since the multiplication by this symmetric function is invertible,
   we may define
   \begin{equation}
   \cG_\lambda  = \cF_\varnothing^{-1} \cdot \cF_\lambda \,, \ \ \text{and} \ \ \ \cG = \cF_\varnothing^{-1} \cdot \ev_* \tO
\label{defcGlambda}
 \end{equation}%
   where the dot denotes multiplication of symmetric functions.

   The sheaves $\cF_\lambda$ and $\cG_\lambda$ are related by the
   action of correspondences from Section \ref{s_Ed}. 

   \subsubsection{}

   Let
   $$
   \Attr \subset \Hilb(D_1)
   $$
   denote the full attracting set for the torus action as in Section
   \ref{s_stable}, that is, the set of point that have a limit as the
   $(xz)$-coordinate is scaled to $0$ while the $y$-coordinate is
   scaled to $\infty$. We have
   $$
   \Attr =\{\textup{subschemes set-theoretically supported on the
     $(xz)$-axis}\} \,.
   $$
   Our next goal is the following

   \begin{prop}
     \begin{equation}
     \supp \cG_\lambda \subset \Attr \,.
\label{suppcG}
\end{equation}
   \end{prop}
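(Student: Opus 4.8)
The plan is to read off the support of $\cF_\lambda$ geometrically and then to interpret the division by $\cF_\varnothing$ as stripping off the part of the support that escapes under the torus flow. Since the evaluation map $\ev_1$ is proper, $\supp\cF_\lambda$ is contained in the image under $\ev_1$ of the locus of stable pairs whose restriction to $D_2$ lies in $\supp\Stab(\lambda)$; by the defining support property of stable envelopes the latter sits inside the attracting set of $\Hilb(D_2)$, i.e.\ over subschemes set-theoretically supported on the $(xz)$-axis of $D_2$.

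First I would invoke the classification of complete curves in $X$ from Section~\ref{s_irr_curves}. Every stable pair (on any allowed degeneration of $X$) splits set-theoretically into components supported along $C_1$ and components supported along $C_2$ together with the smoothing of $C_1\cup C_2$. As the Example computation shows, the $C_1$-components slide freely in the $y$-direction and meet $D_1$ along its $y$-axis, which is the repelling axis for the torus of Section~\ref{s_stable}; the remaining components are precisely the ones that can reach $D_2$, and being pinned at the node $C_1\cap C_2$ and forced to meet $D_2$ along its attracting axis, they have a limit under the flow and meet $D_1$ along its attracting $(xz)$-axis. Consequently $\supp\cF_\lambda$ lies in the image of the union map $\Hilb(\{y\text{-axis}\})\times\Attr\to\Hilb(D_1)$, i.e.\ consists of subschemes $Z_y\sqcup Z_{xz}$ with $Z_y$ on the $y$-axis and $Z_{xz}\in\Attr$.

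Second, I would identify the $y$-axis bubbling with multiplication by $\cF_\varnothing$. The degrees of freedom of the freely sliding $C_1$-bubbles are exactly those recorded by the correspondences $E_d$ of Section~\ref{s_Ed}, which by Proposition~\ref{p_ed} act as multiplication by a symmetric function; specialising to $\lambda=\varnothing$, where $\cG_\varnothing=\cF_\varnothing^{-1}\cF_\varnothing=1$ is the class of the empty core, forces that symmetric function to be exactly $\cF_\varnothing$ of \eqref{Fnot}. This is the content of the remark that $\cF_\lambda$ and $\cG_\lambda$ are related by the correspondences of Section~\ref{s_Ed}: geometrically $\cF_\lambda=\cF_\varnothing\cdot\cG_\lambda$ is obtained from the core $\cG_\lambda$ by appending $y$-axis bubbles. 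Since $\cF_\varnothing$ is invertible, dividing it off removes these bubbles and leaves a class supported on $\Attr$, which is \eqref{suppcG}.

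The main obstacle is making the decoupling of the $C_1$-bubbles rigorous and uniform in $\lambda$. Concretely, one must show that the bubble contribution is exactly multiplication by $\cF_\varnothing$ --- not merely to leading order --- and that it commutes past the $D_2$-reaching core, so that the inverse operator $\cF_\varnothing^{-1}$ strips off precisely the $y$-axis part of the support rather than merely the product $\Hilb(\{y\text{-axis}\})\times\Attr$. I expect this to follow from the degeneration and gluing behaviour of the symmetrised virtual structure sheaf $\tO_\vir$ when a $C_1$-bubble is pulled off toward $y=\infty$, combined with the explicit description of the $E_d$ correspondences; this is where the factorizable structure of the sheaf $\ev_*\tO_\vir$ does the real work.
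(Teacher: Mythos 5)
You have correctly located the mechanism of the proof---factor $\cF_\lambda$ into a $y$-axis ``bubble'' part times a core supported on $\Attr$, then observe that dividing by $\cF_\varnothing$ removes the bubbles---but the proposal stops exactly where the actual proof begins, and the one concrete argument you offer to close the gap does not work. The set-theoretic support analysis in your first two paragraphs (points of $\supp\cF_\lambda$ are unions $Z_y\sqcup Z_{xz}$) is by itself powerless against the division: $\cF_\varnothing^{-1}$ is a formal inverse in $\Lambda$, an infinite series of multiplication operators, and multiplication of symmetric functions corresponds to a union correspondence on Hilbert schemes, so nothing prevents the support of $\cF_\varnothing^{-1}\cdot\cF_\lambda$ from spreading off $\Attr$ unless one has an \emph{exact} K-theoretic identity rather than a support containment. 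Moreover, your argument that the bubble contribution ``is exactly $\cF_\varnothing$'' by specializing to $\lambda=\varnothing$ is circular: it presupposes what has to be proven, namely that the off-axis part of $\cF_\lambda$ is given by a single multiplication operator, uniform in $\lambda$ and exact to all orders in $q$. Proposition \ref{p_ed} says the correspondences $E_d$ act by multiplication by \emph{some} symmetric functions; it does not say that $\cF_\lambda$ decomposes through them in the way you need.

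The paper closes precisely this gap with two steps you have deferred. First, it proves the local factorization \eqref{Ffact}: on the open set $U\subset\Hilb(D_1)$ of subschemes $Z_1\cup Z_2$ with disjoint supports and $Z_1$ off the $(xz)$-axis, one has $\cF_\lambda\big|_U=\pi^*\left(\cF_\varnothing\boxtimes\cF_\lambda\right)$, justified by the classification of Section \ref{s_irr_curves}: the only curves that can leave the zero section are of class $C_1$, and once they leave, their deformation theory is independent of the rest of the pair and of $\lambda$---this is why the first factor is literally the $1$-leg class $\cF_\varnothing$ of \eqref{Fnot}, with no separate identification needed. (Incidentally, your phrase that the $D_2$-reaching components are ``pinned at the node'' is inaccurate: fiber-class curves slide; it is the support condition on $\Stab(\lambda)$ that pins their $y$-coordinate.) Second, since \eqref{Ffact} holds only on $U$ and shifts degree, the support statement is extracted by induction on $n$: writing $\cF_\lambda=\cF_\varnothing\cdot\cG_\lambda$, restricting the degree-$n$ component to $U=\Hilb(D_1,n)\setminus\Attr$, and using the induction hypothesis that $\cG_{\lambda,n-1}$ is supported on $\Attr$, one gets $\cG_\lambda\big|_U=\left(\cF_\lambda-\cF_\varnothing\cdot\cG_{\lambda,n-1}\right)\big|_U=0$ in $K_\bT(U)$. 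This inductive restriction argument, together with the proof of \eqref{Ffact}, is exactly the content you flagged as ``the main obstacle''; without it the proposal is an outline rather than a proof.
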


   \noindent In other words, $\cG$ comes from the first term in the
   following exact sequence
   $$
   K_\bT(\Attr) \to K_\bT(\Hilb) \to K_\bT(\Hilb\setminus \Attr) \to 0
   \,.
   $$

   \subsubsection{}
   The sheaf $\cF_\lambda$ is factorizable in the following
  sense. Consider an open subset $U$ of $\Hilb(D_1)$ formed by
  subschemes of the form $Z_1 \cup Z_2$ such that
  $$
  \supp Z_1 \cap \supp Z_2 = \varnothing
  $$
  and $\supp Z_1$ does not meet the $(xz)$-axis. Then
  \begin{equation}
  \cF_\lambda\big|_U = \pi^*\left(\cF_\varnothing \boxtimes
  \cF_\lambda\right)\label{Ffact}
\end{equation}
where
$$
\pi (Z_1 \cup Z_2) = (Z_1,Z_2) \in \Hilb(D_1,|Z_1|) \times
\Hilb(D_2,|Z_2|) \,. 
$$
Indeed, the only curves in our geometry that can leave the
zero section $S\subset X$ are curves of the class $C_1$.
Once they leave the zero section, their deformation theory is
independent of the rest of the curve and of the partition $\lambda$.

\subsubsection{}

  \begin{proof}
    By construction, 
  \begin{equation}
    \label{defcFl2
    }
    \cF_\lambda = \cF_\varnothing \cdot \cG_\lambda\,. 
  \end{equation}
  Using the factorization \eqref{Ffact}, we now prove
  \eqref{suppcG}  by induction of the number $n$ in $\Hilb(D_1,n)$.
  The statement being vacuous for $n=0$, assume that
the sheaf
$$
\cG_{\lambda,n-1} = \cG_\lambda\big|_{\sqcup_{k< n} \Hilb(D_1,k)}
$$
is supported on the set $\Attr$. Consider
$$
U = \Hilb(D_1,n) \setminus \Attr \,. 
$$
By factorization
$$
\cG_\lambda\Big|_U = \left(\cF_\lambda - \cF_\varnothing \cdot \cG_{\lambda,n-1}
\right)\Big|_U = 0 \in K_\bT(U) \,, 
$$
as was to be shown. 
\end{proof}

\subsection{Conclusion of the proof}

Let's pair $\cG_\mu$ with a class of the stable envelope correspinding to the opposite chamber.  Similarly as in the 1-leg case, we have
\begin{lem}
$$
\chi(\Stab(\lambda) \otimes \cG_\mu) \in \textup{monomial weight} \cdot  \Z[\kappa^{\pm 1}][[q]].
$$
\end{lem}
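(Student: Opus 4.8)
The plan is to mirror the boundedness argument already used in the $1$-leg case (the Lemma proved around \eqref{stab_ev}), but now applied to the normalized sheaf $\cG_\mu$ rather than to $\ev_* \tO_\vir$ directly. Since $\cG_\mu = \cF_\varnothing^{-1}\cdot \cF_\mu$ and $\cF_\mu = \ev_{1,*}(\tO_\vir\otimes \ev_2^*\Stab(\mu))$, I would first reduce the claim to controlling the behavior of the localization contributions under the scaling limit \eqref{limxyz}, $x^{\pm1},y^{\pm1},z^{\pm1}\to\infty$ with $xyz$ fixed. Concretely, I would pair with $\Stab(\lambda)$ for the opposite chamber and expand $\chi(\Stab(\lambda)\otimes \cG_\mu)$ as a sum of fixed-point contributions via equivariant localization; because $\cG$ has support contained in $\Attr$ by the Proposition \eqref{suppcG}, the relevant fixed points are those supported on the $(xz)$-axis, and the opposite-chamber stable envelope $\Stab(\lambda)$ is precisely matched to make the pairing proper.

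The key mechanism is the same as in the earlier Lemma: each fixed-point term is a rational function of $x,y,z,\kappa$, and I must show every such term stays finite as $x^{\pm1},y^{\pm1},z^{\pm1}\to\infty$. The twist by $\det(\cT^{1/2}_\vir)^{-1}$ in the definition \eqref{tOvir} balances the weights coming from the first two summands of \eqref{cTvir_sum}, and the weight condition in the definition of the stable envelope $\Stab(\lambda)$ balances the contribution from the last summand $\cT^{1/2}_{\Hilb(D)}$. What is new here is that I must check that forming $\cG_\mu$, i.e.\ dividing by $\cF_\varnothing$ and inserting $\ev_2^*\Stab(\mu)$, does not spoil this balance. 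The division by $\cF_\varnothing$ is multiplication by the inverse of the invertible symmetric function in \eqref{Fnot}, and by the $1$-leg result this factor already lands in $\Z[\kappa^{\pm1}][[q]]$ up to a monomial weight; the insertion of $\ev_2^*\Stab(\mu)$ contributes weights governed again by the stable-envelope inequalities, now for the second leg $D_2$, so its contribution is bounded after the corresponding monomial normalization.

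The main obstacle I anticipate is the bookkeeping of the two competing monomial prefactors. On the $D_1$ side the normalization $(xz)^{|\lambda|^2/2}$ was needed in \eqref{stab_ev}, and on the $D_2$ side an analogous $(xz)^{|\mu|^2/2}$-type factor enters through $\Stab(\mu)$; I must verify that after stripping off a single monomial weight the remaining expression is genuinely a Laurent polynomial in $\kappa$ with coefficients that are integers, i.e.\ that all dependence on the individual variables $x,y,z$ (beyond $\kappa^2=xyz$) cancels in the limit. Here I would lean on the refined-limit computation of Section \ref{s_refined}, which already expresses the relevant fixed-point sums in terms of skew Schur functions evaluated at the geometric progressions $q/\kappa,(q/\kappa)^2,\dots$ and $q\kappa,(q\kappa)^2,\dots$, so that the $\kappa$-only dependence is manifest once the limit is taken.

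Once finiteness is established in one such limit, the conclusion is automatic: a rational function of $x,y,z,\kappa$ that is bounded as $x^{\pm1},y^{\pm1},z^{\pm1}\to\infty$ with $xyz$ fixed, and that a priori lies in the nonlocalized ring $K_{\bTt}(\pt)=\Z[x^{\pm1},y^{\pm1},z^{\pm1},\kappa,y^{1/2}]/(\kappa^2-xyz)$, must in fact depend only on $\kappa$, hence lies in $\Z[\kappa^{\pm1}][[q]]$ after extracting the monomial weight. I would package this final step exactly as in the $1$-leg proof, citing the properness of the support intersection between $\cG_\mu$ and $\Stab(\lambda)$ to guarantee that the pairing is a priori a power series in $q$ with coefficients in the nonlocalized K-theory, and that only the surviving $\lambda=\mu$-type diagonal contributions matter in the limit.
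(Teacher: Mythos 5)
Your proposal is correct and takes essentially the same approach as the paper, whose entire proof of this lemma is the phrase ``similarly as in the $1$-leg case'': the support result $\supp \cG_\mu \subset \Attr$ paired against the opposite-chamber $\Stab(\lambda)$ gives a proper intersection and hence a priori coefficients in the nonlocalized ring $K_{\bTt}(\pt)$, after which the balancing from the $\det\left(\cT^{1/2}_\vir\right)^{-1}$ twist and the stable-envelope weight conditions yields boundedness in the limit \eqref{limxyz}, forcing dependence on $\kappa$ alone up to a monomial weight. Your additional bookkeeping for the $\cF_\varnothing^{-1}$ normalization and the $\ev_2^*\Stab(\mu)$ insertion simply makes explicit what the paper leaves implicit.
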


Thus this function can be extracted from the refined limit (\ref{limref}), where $y$-direction is preferred. To get the formulas simpler, we should divide it by the contribution of the gluing operator ${\bf G}: K_{eq}(\Hilb(D_2)) \to K_{eq}(\Hilb(D_2))$,
which is known by \cite{OkSmir}:
$$
{\bf G} = \sSd \left(
\frac{1-q\kappa}{1-q/\kappa} p_1 \bar p_1
\right)
$$

\begin{lem}
\begin{equation}
\textup{monomial weight} \cdot \chi(\Stab(\lambda) \otimes {\bf G}^{-1}(\Stab(\mu)),
\ \ev_* \cG 
) = \left.|\kappa|^{\lambda} s_{\mu/\lambda}\right|_{p_i = -\frac{q^i}{1-(q\kappa)^i}} \label{2ls}
\end{equation}
\end{lem}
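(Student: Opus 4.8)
The plan is to mirror the $1$-leg argument, now carrying the second leg through the same computation. First I would invoke the boundedness statement of the preceding lemma: after the indicated monomial twist, $\chi(\Stab(\lambda)\otimes{\bf G}^{-1}(\Stab(\mu)),\ev_*\cG)$ lies in $\Z[\kappa^{\pm1}][[q]]$ and stays finite along the family (\ref{limxyz}). Since a Laurent polynomial in $\kappa$ with coefficients in $\Z[[q]]$ is determined by its value in any one such limit, it is enough to evaluate everything in the refined limit (\ref{limref}) with the $y$-direction preferred.

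In that limit every ingredient is available. The raw two-leg vertex between fixed-point (partition) states is the $\Gamma$-operator matrix element of Section \ref{s_refined}, equal up to the prefactor of (\ref{reflimver}) to $\langle\lambda|\prod_{i\ge0}\Gamma_-((q/\kappa)^i)\prod_{j\ge1}\Gamma_+((q\kappa)^j)|\mu\rangle$. The two stable envelopes are expanded as Schur functions by Proposition \ref{p_stab_schur}, in the plethystically rescaled variables $p_i/(1-(xz)^i)$ that simplify in the refined limit. Passing from $\ev_*\tO$ to $\cG=\cF_\varnothing^{-1}\cdot\ev_*\tO$ strips off the $1$-leg factor (\ref{Fnot}), while the explicit gluing operator ${\bf G}=\sSd\!\left(\tfrac{1-q\kappa}{1-q/\kappa}p_1\bar p_1\right)$ of \cite{OkSmir} is precisely the edge contribution propagated along the internal $C_1\cup C_2$ direction.

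The computation then reduces to the symmetric-function calculus already used for Proposition \ref{p_agree1}, now carried out in two sets of variables, $p$ for $D_1$ and $\bar p$ for $D_2$. I would substitute the expansion (\ref{reflimver}) of the vertex as a prefactor times $\sum_\eta s_{\lambda/\eta}(1,q/\kappa,\dots)\,s_{\mu/\eta}(q\kappa,\dots)$, replace the stable envelopes by their Schur expansions, and evaluate the resulting pairing with the refined scalar product of Proposition \ref{p_agree1}. Dividing by ${\bf G}$ and by $\cF_\varnothing$ is designed to remove exactly the internal-edge summation and the $1$-leg dressing, so that the sum over the internal diagram $\eta$ collapses to the single skew Schur $s_{\mu/\lambda}$. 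Reading off the $(q\kappa)$-variables through this reduction produces the specialization $p_i=-q^i/(1-(q\kappa)^i)$, and the count of positive weights in Section \ref{s_refined}, each contributing a factor $-\kappa$, assembles into the prefactor $|\kappa|^{\lambda}$ together with the stated monomial weight.

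The hard part will be precisely this collapse together with the prefactor bookkeeping. One must verify that the gluing operator ${\bf G}^{-1}$, the refined pairing measure, and the stable-envelope plethysm of Proposition \ref{p_stab_schur} conspire to cancel the sum over $\eta$ on the nose, leaving a single skew Schur rather than a residual Cauchy-type sum, and that the accompanying monomials assemble into exactly $|\kappa|^{\lambda}$ times the asserted monomial weight with nothing left over. Once this matrix element is matched with the one predicted by writing $\cG$ as $\sSd$ of the last two summands of (\ref{mainTh}), multiplying back by $\cF_\varnothing$, which is $\sSd$ of the first summand, and using that $\sSd$ turns sums into products recovers the full two-leg vertex (\ref{mainTh}).
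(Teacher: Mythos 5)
Your opening move is the same as the paper's: by the preceding rigidity lemma the twisted pairing lies in $\Z[\kappa^{\pm 1}][[q]]$, so it can be read off from the refined limit \eqref{limref}. The genuine gap is in what you do next. You propose to ``substitute the expansion \eqref{reflimver} of the vertex'' into the pairing, but \eqref{reflimver} is the refined limit of the \emph{bare} equivariant vertex, computed by localization on the absolute moduli space $\PT(\C^3)$, whereas $\ev_*\tO_\vir$ is a \emph{relative} (capped) count on $X/D_1\sqcup D_2$. Nothing in your argument connects the two, and this connection is precisely the content of the paper's proof: by localization,
\begin{equation*}
\ev_* \tO_\vir \,=\, |\text{capping}^{0,-1}_{D_1}|^t \cdot \text{vertex} \cdot |\text{capping}^{0,-1}_{D_2}| \,,
\end{equation*}
where the capping operators are fundamental solutions of a $q$-difference equation, and the essential input is that in the limit \eqref{limref} their matrix elements between a stable envelope and a torus-fixed point trivialize,
\begin{equation*}
\lim \ \langle \Stab(\lambda)\, |\text{capping}^{0,-1}_{D_1}|\, \text{fixed}(\mu)\rangle = \textup{monomial weight}(\lambda)\cdot \delta_{\lambda^t\mu}\,,
\end{equation*}
and similarly on the $D_2$ side \emph{after} inserting ${\bf G}^{-1}$. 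This is why the statement pairs against $\Stab(\lambda)$ and ${\bf G}^{-1}(\Stab(\mu))$ in the first place: they form the renormalized bases in which the capping limits become identity matrices, and their monomial prefactors compensate exactly the prefactor of \eqref{reflimver}. Note also that the transposition $\lambda\mapsto\lambda^t$ coming out of these limits is what converts the principal specialization $s_{\mu/\eta}(q\kappa,(q\kappa)^2,\dots)$ into the \emph{signed} specialization $p_i=-q^i/(1-(q\kappa)^i)$ on the right of \eqref{2ls}; this sign is a plethystic $\omega$-twist, not a monomial, so your ``prefactor bookkeeping'' cannot produce it.

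Two secondary points. First, your reading of ${\bf G}$ as ``the edge contribution propagated along the internal $C_1\cup C_2$ direction'' is off: in this geometry there is no internal edge --- both $D_1$ and $D_2$ are relative divisors of a single vertex --- and the operator of \cite{OkSmir} enters as the normalization of the capping on the $D_2$ side, as in the displayed limit above. Second, the passage from the capped matrix elements back to the closed form \eqref{mainTh} (your final step of multiplying back $\cF_\varnothing$ from \eqref{Fnot} and ${\bf G}$) is not part of this lemma; in the paper it is the separate, purely symmetric-function lemma that follows, carried out by the Hall-pairing calculus, and should not be folded into the collapse of the $\eta$-sum inside the refined limit. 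So while your proposal correctly identifies the limit, the stable-envelope expansions, and the Hall-pairing machinery as ingredients, it is missing the one idea the proof actually turns on --- the capping-operator factorization and its trivialization --- and without it the substitution of \eqref{reflimver} for matrix elements of $\ev_*\cG$ is unjustified.
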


\begin{proof}
By localization,
$$
\ev_* \tO_\vir = |\text{capping}^{0,-1}_{D_1}|^t  \cdot \text{vertex}  \cdot |\text{capping}^{0,-1}_{D_2}|,
$$
where $\text{capping}$ is the solution to a certain $q$-difference equation as in \cite{OkLect}. In the limit we are considering
$$
\lim  \ \langle \Stab(\lambda) \  |\text{capping}^{0,-1}_{D_1}|  \ \text{fixed}(\mu) \rangle = \textup{monomial weight}(\lambda) \cdot \delta_{\lambda^t \mu},
$$
$$
\lim  \ \langle \Stab(\lambda) {\bf G}^{-1} \  |\text{capping}^{0,-1}_{D_2}|  \ \text{fixed}(\mu) \rangle = \text{monomial weight}(\lambda) \cdot \delta_{\lambda^t \mu},
$$
which basically means that the solution of a $q$-difference equation is trivial in the limit $q \to 0$, but we have to renormalize the basis appropriately, that is why we consider the matrix element between a stable envelope and a torus-fixed point. The prefactors here are monomials and exactly compensate the ones in (\ref{reflimver}).
\end{proof}

\begin{lem}
Formulas (\ref{mainTh}) and (\ref{2ls}) are equivalent.
\end{lem}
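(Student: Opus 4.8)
The plan is to show that \eqref{mainTh} and \eqref{2ls} carry the same information about the class $\ev_* \tO_\vir$, by first recasting \eqref{mainTh} as an explicit formula for the reduced operator $\cG$ of \eqref{defcGlambda} and then checking that this formula reproduces the stable-envelope matrix elements on the right of \eqref{2ls}. First I would exploit the multiplicativity of $\sSd$: the exponential formula \eqref{fsSd} gives $\sSd(f+g)=\sSd(f)\cdot\sSd(g)$ because the Adams operations $\Psi_n$ are additive. Hence the three-term plethystic exponential \eqref{mainTh} factors as $\cF_\varnothing\cdot\sSd(A)$, where
\begin{equation}
A = -\frac{1}{\{y\}}\,\frac{q}{1-q/\kappa}\,\bar p_1 \;-\; \frac{q}{\{y\}\{xz\}}\,\frac{1-q\kappa}{1-q/\kappa}\,p_1\bar p_1\,,
\end{equation}
and $\cF_\varnothing=\sSd\!\left(-\tfrac{1}{\{y\}}\tfrac{q}{1-q/\kappa}p_1\right)$ is exactly the established $1$-leg answer \eqref{Fnot}. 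By the definition $\cG=\cF_\varnothing^{-1}\cdot\ev_*\tO$ in \eqref{defcGlambda}, the assertion \eqref{mainTh} is therefore \emph{equivalent} to the single clean statement $\cG=\sSd(A)$.

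Next I would compute the left-hand side of \eqref{2ls} starting from $\cG=\sSd(A)$ and match it with the claimed skew Schur function. Here $\cG$ lives in the completed tensor product of two copies of $\Lambda$, the $p_i$ attached to $\Hilb(D_1)$ and the $\bar p_i$ to $\Hilb(D_2)$. Expanding via \eqref{fsSd} and \eqref{Psiq}, the operation $\Psi_n$ sends $p_1\mapsto p_n$, $\bar p_1\mapsto\bar p_n$, $q\mapsto q^n$, $\kappa\mapsto\kappa^n$ and $\{xz\}\mapsto\{(xz)^n\}$, so $\cG$ becomes an exponential that is \emph{linear} in the $\bar p_n$ and couples the two Fock spaces only through the cross term $p_n\bar p_n$. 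The ambient pairing in \eqref{2ls} factors as the product of the two geometric Hall pairings, one on each divisor, computed exactly as in the proof of Proposition \ref{p_agree1} after substituting the Schur-function form \eqref{stab_schur} of $\Stab(\lambda)$; I apply this product pairing to the \emph{coupled} class $\cG$, contracting the $p_n$ against $\Stab(\lambda)$ and the $\bar p_n$ against ${\bf G}^{-1}\Stab(\mu)$. The support property \eqref{suppcG} is what makes the pairing against the opposite chamber well behaved and triangular, so that only the skew structure survives. Crucially, the gluing operator ${\bf G}=\sSd\!\left(\tfrac{1-q\kappa}{1-q/\kappa}p_1\bar p_1\right)$ shares the coefficient $\tfrac{1-q\kappa}{1-q/\kappa}$ with the cross term of $A$, and the role of ${\bf G}^{-1}$ is to absorb precisely this coupling.

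The main obstacle is disentangling the two Fock spaces created by the $p_n\bar p_n$ cross term. Concretely, I would rewrite $\cG$ and ${\bf G}^{-1}$ in the vertex-operator language of Section \ref{s_refined}, using $\Gamma_\pm$ together with the skew-Cauchy identities $\prod_i\Gamma_-(x_i)|\lambda\rangle=\sum_\mu s_{\lambda/\mu}(x)|\mu\rangle$ and its $\Gamma_+$ analogue, so that pairing against $\Stab(\lambda)$ on one side and ${\bf G}^{-1}\Stab(\mu)$ on the other collapses the coupling into a single skew Schur function $s_{\mu/\lambda}$. The two delicate points are: (i) verifying that the ${\bf G}^{-1}$ twist converts the denominators $1-q^i/\kappa^i$ — which governed the $1$-leg substitution $p_i=-q^i/(1-q^i/\kappa^i)$ in Proposition \ref{p_agree1} — into $1-(q\kappa)^i$, so that the arguments of $s_{\mu/\lambda}$ specialize to exactly $p_i=-q^i/(1-(q\kappa)^i)$; and (ii) tracking the monomial prefactors — the $\kappa^{|\lambda|}$ weight together with the $(xz)$-powers carried by \eqref{stab_schur} and by the symmetrization in $\tO_\vir$ — and checking that they reproduce the $|\kappa|^\lambda$ prefactor and the stated ``monomial weight'' on the left of \eqref{2ls}. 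Once these bookkeeping identities are in place the two expressions agree term by term, which, by the reduction in the first paragraph, is the desired equivalence of \eqref{mainTh} and \eqref{2ls}.
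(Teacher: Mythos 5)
Your proposal is correct and takes essentially the same route as the paper: factor $\cF_\varnothing$ out of \eqref{mainTh} by multiplicativity of $\sSd$ so the claim becomes $\cG=\sSd(A)$, pair against the Schur-function form \eqref{stab_schur} of the stable envelopes with ${\bf G}^{-1}$ absorbed into the coefficients (turning the denominator $1-q/\kappa$ of the $\bar p$ term into $1-q\kappa$), convert the geometric pairing $\chi$ into the Hall scalar product by the plethystic rescalings of Proposition \ref{p_agree1}, and evaluate to $\kappa^{|\lambda|}\, s_{\mu/\lambda}\big|_{p_i=-q^i/(1-(q\kappa)^i)}$. The paper's proof is precisely this computation in condensed form, and your $\Gamma_\pm$/skew-Cauchy reformulation is merely standard notation for the same final Hall-pairing evaluation, with the two ``delicate points'' you flag handled there implicitly under the label of the monomial weight.
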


\begin{proof}
\begin{multline}
\chi\left(
y^{-|\lambda|/2} s_\lambda\left(\frac{p_i}{1-y^i}\right) \otimes  s_\mu\left(\frac{\bar p_1^i}{1-(xz)^{-i}}\right), \right.
\\
\left.
\sSd \left(
-\frac{1}{\{y\}} \frac{q}{1-q \kappa} \bar p_1 - \frac{q}{\{y\}\{xz\}} p_1 \bar p_1
\right)
\right) = \\
=
\left \langle
y^{-|\lambda|/2} s_\lambda(-y^i p_i) \otimes  s_\mu(\bar p_i),
\sSd \left(
\sqrt{y} \frac{q}{1-q \kappa} \bar p_1 - \kappa p_1 \bar p_1
\right)
\right \rangle_\text{Hall scalar product} = \\
=
\left \langle
s_\lambda(p_i) \otimes s_\mu(\bar p_i), \sSd \left(
-\frac{q}{1-q \kappa} \bar p_1  + \kappa p_1 \bar p_1
\right)
\right \rangle = 
\left.|\kappa|^{\lambda} s_{\mu/\lambda}\right|_{p_i = -\frac{q^i}{1-(q\kappa)^i}}
\end{multline}
\end{proof}

\section{Operator corresponding to resolved conifold (4-point
  function)}

%

In this section we will apply the computation of the two-legged vertex 
for the following relative geometry $X/D$:

\begin{center}
  \includegraphics{pic.1} \label{piccon}
\end{center}

This 3-fold has $H_2(X, \Z) = \Z$, and we define the virtual structure sheaf
$$
\tO_\vir = \cO_\vir \otimes \left(
\cK \otimes \det \text{H}^\bullet(\cF|_D)
\right)^{1/2} \cdot (-q)^\chi \otimes Q^\text{deg}.
$$
Let 
$$
\ev: \PT(X/D) \to \Hilb(D)
$$
be the restriction map.
As in the previous section, we identify
$$
K_{eq}(\Hilb(D)) = K_{eq}(\text{pt}) \otimes \Lambda(p) \otimes \Lambda(\bar p).
$$

\begin{thm}
\begin{multline*}
\ev_*{\cal \tO}^\text{vir} \cdot
\left(\left.\ev_*{\cal \tO}^\text{vir}\right|_{Q=0}\right)^{-1} = \sSd \left(
-\frac{q\left( 1 + \frac{\{y\}}{\{xz\}} Q \right)}{\{y\}(1-q/\kappa)} p_1 
- \right. \\
-\left.\frac{q\left(
1 + Q \frac{\{xz\}}{\{y\}}
\right)}{\{xz\} (1-q/\kappa)} \bar p_1 - \frac{Q q}{\{y\}\{xz\}} 
\frac{1-q\kappa}{1-q/\kappa} p_1 \bar p_1
\right)
\end{multline*}
\end{thm}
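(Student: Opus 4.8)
The plan is to reduce the conifold geometry to the $2$-leg vertex computed in Theorem \ref{t_main} by a degeneration argument, so that the operator $\ev_*\tO_\vir$ for the conifold factors through the gluing of two copies of the basic local piece $X = S\times\C$. The resolved conifold $X$ with its relative divisor $D$ degenerates into two copies of the basic geometry glued along the common preferred direction, and the K\"ahler parameter $Q$ records the degree of the curve class crossing the neck of the degeneration. I would first make this degeneration precise: by the degeneration formula for relative PT invariants (as in \cite{OkLect}), $\ev_*\tO_\vir$ for the conifold is expressed as a sum over intermediate partitions labeling the curve class along the gluing locus, with each summand a composition of the two local $2$-leg vertices capped by the gluing operator $\mathbf G=\sSd\!\left(\frac{1-q\kappa}{1-q/\kappa}p_1\bar p_1\right)$ from \cite{OkSmir}, and the power of $Q$ tracking that degree.

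Once the degeneration is set up, the computation becomes bookkeeping with the $\sSd$-formalism. Each local contribution is given by \eqref{mainTh}, so the full answer is the $\sSd$ of a sum of the basic one-curve contributions ($p_1$, $\bar p_1$, and $p_1\bar p_1$ terms) weighted by appropriate powers of $Q$; here I would exploit the exponential/multiplicative structure from Lemma \ref{fsSd}, under which gluing along the neck corresponds to a contraction in the Fock space that multiplies the linear forms inside the $\sSd$ rather than their exponentials. Concretely, I expect each of the three basic curves in the conifold --- the two legs and the compact $\mathbb P^1$ carrying $Q$ --- to contribute its own term, with the $Q$-dependence entering linearly through the curve crossing the neck. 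The normalization by $\left(\left.\ev_*\tO_\vir\right|_{Q=0}\right)^{-1}$ strips off precisely the $Q$-independent part, which is the product of the two $Q=0$ local vertices, leaving the stated formula in which the coefficients of $p_1$ and $\bar p_1$ acquire the correction factors $\left(1+\tfrac{\{y\}}{\{xz\}}Q\right)$ and $\left(1+Q\tfrac{\{xz\}}{\{y\}}\right)$ and the $p_1\bar p_1$ coefficient is multiplied by $Q$.

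\textbf{The main obstacle} I anticipate is the precise tracking of the prefactors and monomial weights through the degeneration and the division by the $Q=0$ specialization. As the one- and two-leg computations in Section \ref{s_refined} and the concluding lemmas show, the individual pieces come with monomial-weight ambiguities (powers of $\kappa$, $xz$, $y$) that must cancel exactly for the final answer to lie in the clean form above; verifying that the gluing operator $\mathbf G$ contributes the correct $\frac{1-q\kappa}{1-q/\kappa}$ factor to the mixed $p_1\bar p_1$ term, and that the single-leg corrections organize into the symmetric combinations $\{y\}/\{xz\}$ and $\{xz\}/\{y\}$, is where the argument must be done with care. In particular I would compute the relevant Ext-contribution of the compact curve $\mathbb P^1$ in the conifold (its normal bundle being $\cO(-1)\oplus\cO(-1)$) to confirm that it supplies exactly the factor $\frac{q}{\{y\}\{xz\}}$ multiplying $Q$ in the mixed term, and then check compatibility with the $1$-leg normalization via the same Hall-scalar-product manipulation used in the proof of Proposition \ref{p_agree1}.
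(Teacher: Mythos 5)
Your proposal is correct and takes essentially the same route as the paper: the paper likewise writes the conifold operator as the composition of two capped $2$-leg vertices from Theorem \ref{t_main} glued through the inverse gluing operator with $Q$ inserted to record degree, performs the contractions in the $\sSd$/Fock formalism (producing exactly the six terms you predict, with the $p_1$ and $\bar p_1$ coefficients picking up the factors $1+\tfrac{\{y\}}{\{xz\}}Q$ and $1+Q\tfrac{\{xz\}}{\{y\}}$), and observes that the remaining constant term, the resolved conifold partition function $\bSd\left(-\frac{qQ}{(1-q/\kappa)(1-q\kappa)}\right)$, factors out under the stated normalization. The prefactor bookkeeping you flag as the main obstacle is handled in the paper by the normalization $-\frac{Q}{q}\frac{1-q/\kappa}{1-q\kappa}\{xz\}\{y\}$ built into the glued operator, so no separate Ext computation for the compact $\mathbb{P}^1$ is needed.
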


\begin{proof}
The result is obtained by gluing two capped 2-leg vertices along one leg, and can be represented pictorially.

%
%
%


\begin{center}
  \includegraphics{fig10.0}
\end{center}

The operator is the composition of the following operators:
\be
V_1 = \sSd \left(
-\frac{q}{\{y\}(1-q/\kappa)} {\bf p}_1 - \frac{q}{\{y\}(1-q/\kappa)} {\bf q}_1 - \frac{q}{\{y\}\{xz\}} \frac{1-q \kappa}{1-q/\kappa} {\bf p}_1 {\bf q}_1
\right),
\ee
\be
V_2 = \sSd \left(
-\frac{q}{\{xz\}(1-q/\kappa)} {\bf r}_1 - \frac{q}{\{xz\}(1-q/\kappa)} {\bf s}_1 - \frac{q}{\{y\}\{xz\}} \frac{1-q \kappa}{1-q/\kappa} {\bf r}_1 {\bf s}_1
\right),
\ee
\be
G^{-1} = \sSd \left(
-\frac{Q}{q} \frac{1-q/\kappa}{1-q \kappa} \{xz\}\{y\} {\bf q}_1 {\bf r}_1
\right).
\ee
They act on the following Fock spaces:
\be
V_1 : \text{Fock}({\bf p}) \to \text{Fock}({\bf q}) \\
G^{-1} : \text{Fock}({\bf q}) \to \text{Fock}({\bf r}) \\
V_2 : \text{Fock}({\bf r}) \to \text{Fock}({\bf s})
\ee
Here we introduce $Q$ into the gluing operator in order to measure degrees of curves.
We contract bilinear forms using the standard Hall product on symmetric functions.

Taking the composition of the first vertex operator with the gluing matrix, we obtain,
\be
V_1 \cdot G^{-1} = \sSd \left(
-\frac{q}{\{y\}(1-q/\kappa)} {\bf p}_1 + Q \frac{1}{1-q \kappa} \{xz\}{\bf r}_1 + Q {\bf p}_1 {\bf r}_1
\right).
\ee
Finally, composing with the second vertex operator, we obtain 6 terms from contractions:
\begin{multline*}
\sSd \left(
 - \frac{q Q}{(1-q/\kappa)(1-q \kappa)} -
\frac{q}{\{y\}(1-q/\kappa)} {\bf p}_1 
  - \frac{q Q}{\{y\}(1-q/\kappa)} {\bf s}_1 +\right. \\
\left.   - \frac{q Q}{\{xz\}(1-q/\kappa)} {\bf p}_1 -
\frac{q}{\{xz\}(1-q/\kappa)} {\bf s}_1 - \frac{q Q}{\{y\}\{xz\}} \frac{1-q \kappa}{1-q/\kappa} {\bf p}_1 {\bf s}_1
\right).
\end{multline*}
The first term in the plethystic exponential is the partition function of ${\cal O}(-1) \oplus {\cal O}(-1) \to \mathbb{P}^1$, 
$$
\mathsf{Z}_\text{resolved conifold} = \bSd \left(
 - \frac{q Q}{(1-q/\kappa)(1-q \kappa)} \right),
$$
which factors out completely,  concluding the proof.

\end{proof}

\end{document}